\begin{document}
\title{Adaptive Learning of Aggregate Analytics under Dynamic Workloads}
\author{Fotis Savva}
\affiliation{%
  \institution{University of Glasgow}
}
\email{f.savva.1@research.gla.ac.uk}

\author{Christos Anagnostopoulos}
\affiliation{%
  \institution{University of Glasgow}
}
\email{christos.anagnostopoulos@glasgow.ac.uk}

\author{Peter Triantafillou}
\affiliation{%
  \institution{University of Warwick}
}
\email{p.triantafillou@warwick.ac.uk}

\begin{abstract}
Large organizations have seamlessly incorporated data-driven decision making in their operations. 
However, as data volumes increase, expensive big data infrastructures are called to rescue. In this setting, analytics tasks become very costly in terms of query response time, resource consumption, and money in cloud deployments, especially when base data are stored across geographically distributed data centers. Therefore, we introduce an adaptive Machine Learning mechanism which is light-weight, stored client-side, can estimate the answers of a variety of aggregate queries and can avoid the big data backend. The estimations are performed in milliseconds are inexpensive and accurate as the mechanism learns from past analytical-query patterns. However, as analytic queries are ad-hoc and analysts' interests change over time we develop solutions that can swiftly and accurately detect such changes and 
adapt to new query patterns. The capabilities of our approach are demonstrated using extensive evaluation with real and synthetic datasets.
\end{abstract}

%
%
 \begin{CCSXML}
<ccs2012>
<concept>
<concept_id>10002950.10003648.10003688.10003699</concept_id>
<concept_desc>Mathematics of computing~Exploratory data analysis</concept_desc>
<concept_significance>500</concept_significance>
</concept>
<concept>
<concept_id>10002951.10003227.10003241.10003244</concept_id>
<concept_desc>Information systems~Data analytics</concept_desc>
<concept_significance>500</concept_significance>
</concept>
<concept>
<concept_id>10010147.10010257.10010258.10010259.10010264</concept_id>
<concept_desc>Computing methodologies~Supervised learning by regression</concept_desc>
<concept_significance>500</concept_significance>
</concept>
<concept>
<concept_id>10010147.10010257.10010258.10010262.10010278</concept_id>
<concept_desc>Computing methodologies~Lifelong machine learning</concept_desc>
<concept_significance>300</concept_significance>
</concept>
</ccs2012>
\end{CCSXML}

\ccsdesc[500]{Mathematics of computing~Exploratory data analysis}
\ccsdesc[500]{Information systems~Data analytics}
\ccsdesc[500]{Computing methodologies~Supervised learning by regression}
\ccsdesc[300]{Computing methodologies~Lifelong machine learning}


\maketitle

\section{Introduction}
With the rapid explosion of data volumes and the adoption of data-driven decision making, organizations have been struggling to process data efficiently. Because of that a surge of companies is turning to popular cloud providers that have created large-scale systems capable of storing and processing large data quantities. However, the problem still remains in that multiple analytics queries are issued by multiple analysts (Figure \ref{fig:overview_sys}) which often overburden data clusters and are costly. 
\begin{figure}[htbp]
\begin{center}
\includegraphics[height=6.5cm,width=7cm,keepaspectratio]{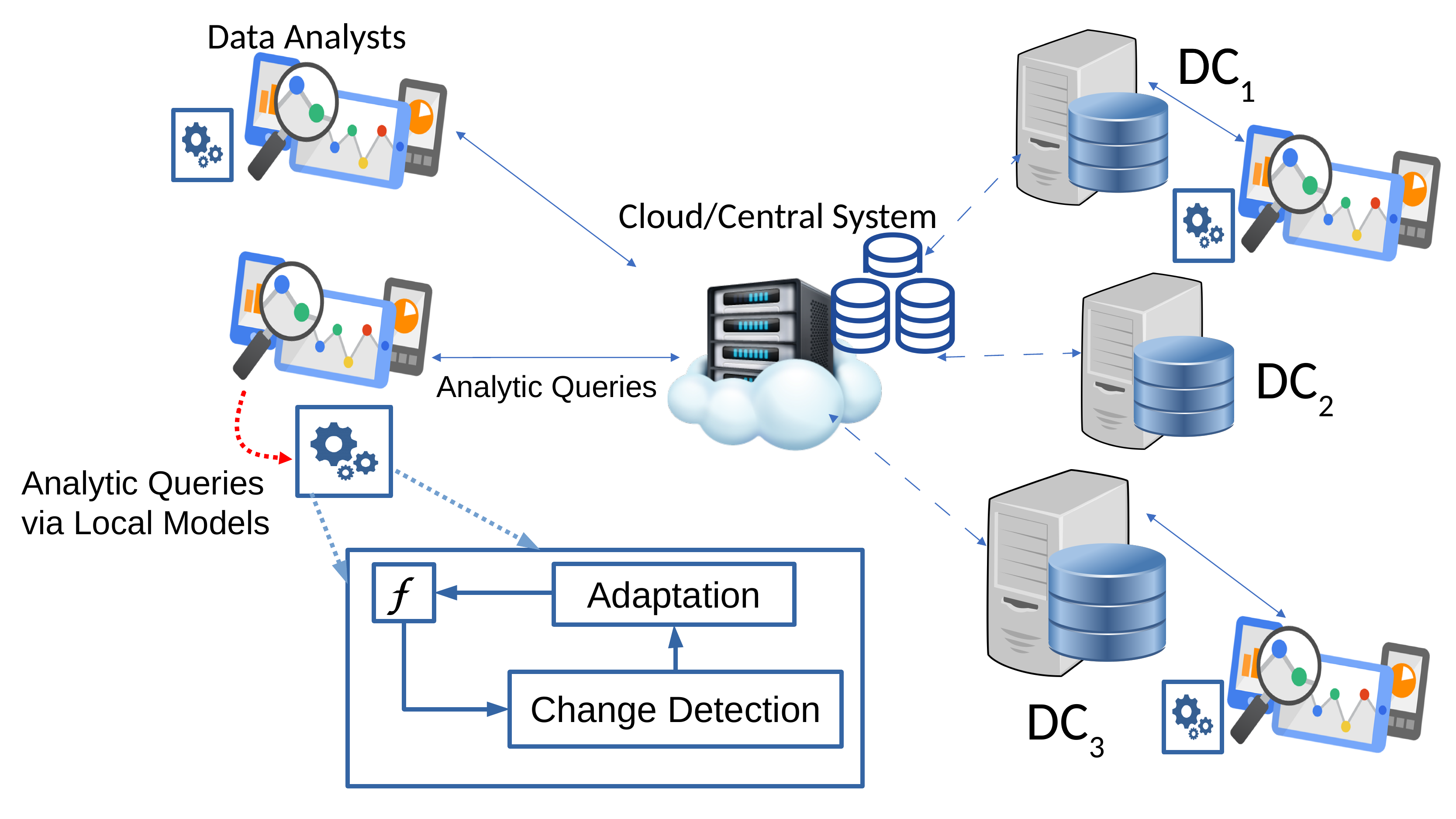}
\caption{Aggregate analytics eco-system with analysts' devices, data centers, and local adaptive ML models.}
\label{fig:overview_sys}
\end{center}
\end{figure}
Data analysts should be able to extract information without significant delays so as not to violate the \textit{interactivity constraint} set around 500ms \cite{liu2014effects}. 
Anything over that limit can
negatively affect analysts' experience and productivity. 
This constraint is particularly important in the context of \textit{exploratory analysis} \cite{exploration}. 
Such analyses are an invariable step in the process of understanding data and creating solutions to support business decisions. Furthermore,
 aggregate analytics are becoming increasingly geo-distributed,
which are time consuming and nearly impossible when 
data have to remain federated without the possibility 
of transferring them to central locations \cite{vulimiri2015wanalytics}. 
Same applies to sensitive data that can only be accessed 
via aggregate queries with no data samples allowed.

\textbf{Vision:} Depicted at Figure \ref{fig:overview_sys} is our vision for an aggregate analytics learning \& prediction 
system that is light-weight, stored on an Analyst's Device (AD) and
adaptive to dynamic query workloads.  
This allows the exploratory process to be executed locally at 
ADs providing predictions to aggregate queries 
thus not overburdening the Cloud/Central System (CS).
Prediction-based aggregate analytics is
expected to save computational and communication resources,
which could be devoted to cases where accurate answers to aggregate 
queries are demanded. From the CS's perspective, 
our system acts as a \textit{pseudo}-caching mechanism to 
reduce communication overhead and computational load 
when it is necessary, thus, allowing for other tasks/processes to run.


Our system offers a learning-based, prediction-driven way 
of performing aggregate analytics in ADs accessing \textit{no data}.
It neither requires data transmission from CS to ADs 
nor from ADs to CS. 
What makes such a system possible is the exploitation of previously executed queries and their answers 
sitting in log files. We adopt Machine Learning (ML) regression models that learn to associate past executed queries with their
answers, and can in turn locally predict the answers 
of \textit{new} queries. 
Subsequent aggregate queries are answered in milliseconds, 
thus, fulfilling the interactivity constraint. 

Furthermore, our framework can directly adapt to analysts' (dynamic) query workloads by monitoring
the analysts' query patterns and 
adjusting their parameters. Shown at Figure \ref{fig:overview_sys} are the ML models $f$ and mechanisms developed for detecting and adapting to changes in query patterns. Both of them are discussed in Sections \ref{sec:cdm} and \ref{sec:adapt}. 


\textbf{Challenges \& Contribution:}
A large number of analysts exist within an organization with 
diverse analytics interests thus their query patterns are expected to differ, 
accessing different parts of the whole data-space. We are challenged to 
support model training over vastly different patterns, 
which are to be drastically changing or expanding in dynamic environments. 
Moreover the models have to be up-to-date w.r.t. pattern changes, which require early query pattern change detection and efficient adaptation. 
Given these challenges, our contributions are:
\begin{enumerate}
    \item A novel query-driven mechanism and query representation that associates queries with their respective answers and can be used by ML models.
    \item A local change detection mechanism for detecting changes in query patterns based on our prediction-error approximation.
    \item A reciprocity-based adaptation mechanism in light of novel query patterns, which efficiently engages the CS to validate the prediction-to-adaptation states transition and guarantees system convergence.
    \item Comprehensive assessment of the system performance and sensitivity analysis using real and synthetic data and query workloads. 
\end{enumerate}
\section{Fundamentals of Query-Driven Learning}
\label{sec:query-driven}
The fundamentals of query-driven mechanism for analytics are: (1) transforming analytic queries in a real-valued vectorial space, (2) quantization of vectorial space, extracting query patterns and (3) training of local regression models for predicting query answers using past issued queries. Principally, we learn to associate the result of a query using the derived query patterns and linking these patterns with local regression models. 
Given an unseen query, we project it to the closest query pattern we have learned and then 
predict its corresponding result without executing the query over the data in a DC/CS. 
\begin{definition} 
A dataset $\mathcal{B}= \{\mathbf{a}\}_{i=1}^n$ is a set of $n$ row data vectors $\mathbf{a} = [a_1,\ldots, a_d] \in \mathbb{R}^{d}$ with real attributes $a_{i} \in \mathbb{R}$.
\end{definition}

Analytics queries are issued over a $d$-dimensional data space and bear two key characteristics: First, they define a subspace of interest, using various predicates on attribute (dimension) values. Second, they perform  
aggregate functions over said data subspaces (to derive key statistics over the subspace of interest). 
We adopt a general vectorial representation for modeling a query over any type of data storage/processing system. 

Predicates over attributes define a data subspace over $\mathcal{B}$ formed by a sequence of logical conjunctions using (in)equality constraints ($\leq, \geq, =$). A \textit{range-predicate} restricts an attribute $a_i$ to be within range [$l_{i}$, $u_{i}$]: $a_i \geq l_{i} \land a_i \leq u_{i}$. We model a range query over $\mathcal{B}$ 
through conjunctions of predicates, i.e., $\bigwedge_{i=1}^{d}(l_{i} \leq a_{i} \leq u_{i})$ represented as a vector in $\mathbb{R}^{2d}$. 
\begin{definition}
A (range) row query vector is defined as $\mathbf{q} = [l_{1},u_{1}, \ldots, l_{d}, u_{d}] \in \mathbb{R}^{2d}$ corresponding to the range query $\bigwedge_{i=1}^{d}(l_{i} \leq a_{i} \leq u_{i})$. The distance between two queries $\mathbf{q}$ and $\mathbf{q}'$ is defined as the $L_{2}^{2}$ norm (Euclidean distance): $\lVert \mathbf{q}-\mathbf{q}' \rVert^2_{2} = \sum_{i=1}^{d}(l_{i}-l'_{i})^{2}+(u_{i}-u'_{i})^{2}$.
\end{definition}

This representation is flexible enough to accommodate a wide variety of queries. As the dimensionality of the query vector is proportional to the data vector, queries with predicates bounding the values of different attributes can be used by the same ML algorithm. This means that only a number of $(l_i, u_i)$ values are set w.r.t the number of predicates for a given query. In addition, we make no assumptions as to the back-end system as what is being parsed are the filters in a query. This allows the mechanism to be deployed in parallel to multiple analytic systems. 
In query-driven learning, we learn to associate a query with its corresponding aggregate result (a scalar $y \in \mathbb{R}$). This is achieved using a $training$ set of query-result pairs $\mathcal{T} = \{(\mathbf{q}_i, y_i)\}_{i=1}^N$ 
obtained after executing $N$ queries over dataset $\mathcal{B}$. 
The goal is to develop an ML model based on $\mathcal{T}$ to 
minimize the expected prediction error between actual $y$ and predicted $\hat{y}$, $\mathbb{E}[(\hat{y}-y)^{2}]$ and predict the result of any unseen query \textit{without} executing it over $\mathcal{B}$.

\subsection{Query Space Clustering}
Recent research analyzing analytics workloads from various domains has shown that queries within analytics workloads share patterns and their results are \textit{similar} having various degrees of overlap \cite{wasay2017data}. 
Based on this evidence, we mine query logs (the $training$ set) and discover clusters of queries (in the vectorial $d$-dimensional space), having similar predicate parameters.
 This partitioning is fundamental to get accurate ML models for predictive analytics, as we then associate different ML predictive models with different clusters.
In this way, learning different data sub-sets is proven to be more efficient in terms of \textit{explainability / model-fitting} and \textit{predictability} than having one global ML model learn everything and is also  \cite{masoudnia2014mixture} known as \textit{ensemble learning} \cite{friedman2001elements}.
To put this in context, consider a discrete time domain $t \in \mathbb{T} = \{1, 2, 3, \ldots\}$, where at each time instance $t$ an analyst issues a query $\mathbf{q}_{t}$. The query is executed and an answer $y_{t}$ is obtained, forming the pair $(\mathbf{q}_{t},y_{t})$. 
The issued queries are stored in a growing set $\mathcal{C}_{t} = \{(\mathbf{q}_1, y_1),\ldots, (\mathbf{q}_t, y_t)\} = \mathcal{C}_{t-1} \cup \{(\mathbf{q}_{t},y_{t})\}$. Given this set, we incrementally extract knowledge from the query vectors and then train \textit{local} ML models that predict the associated outputs given new, unseen queries. This is achieved by on-line partitioning the vectors $\{\mathbf{q}_{1}, \ldots, \mathbf{q}_{t}\} \in \mathcal{C}_{t}$ into disjoint clusters that represent the query patterns of the analysts (fundamentally, within each cluster the queries are much more \textit{similar} than the queries in other clusters). The distance between queries quantifies how close the predicate parameters are 
in the vectorial space. Close queries $\mathbf{q}$ and $\mathbf{q}'$ are grouped together into $K$\footnote{The number of clusters $K$ is automatically identified by the clustering algorithm used. \cite{growingnetworks}} clusters w.r.t. $\lVert \mathbf{q} - \mathbf{q}' \rVert^2_{2}$. The objective is to minimize the expected quantization error $\mathbb{E}[\min_{k =1,\ldots,K}\lVert \mathbf{q}-\mathbf{w}_{k}\rVert_{2}^{2}]$
of all queries to their closest cluster \textit{representative} $\mathbf{w}_{k}$, which reflects the analysts query patterns and best represents each cluster. 
The $K$ query representatives $\mathcal{W} = \{\mathbf{w}_{1}, \ldots, \mathbf{w}_{K}\}$ optimally quantize $\mathcal{C}_{t}$ minimizing the expected quantization error while each query $\mathbf{q}$ is projected onto its closest representative
$\mathbf{w}^{*} = \arg\min_{\mathbf{w} \in \mathcal{W}}\lVert \mathbf{q}-\mathbf{w}\rVert^2_{2}$. 
Based on the partitioning of $\mathcal{C}_{t}$, 
we produce $K$ query-disjoint sub-sets such that $\mathcal{C}_k\cap\mathcal{C}_l \equiv \emptyset$ for $k\neq l$ and $\mathcal{C}_{k} =\{(\mathbf{q},y) \in \mathcal{C}_{t} | \mathbf{w}_{k} = \arg\min_{\mathbf{w} \in \mathcal{W}}\lVert \mathbf{q}-\mathbf{w}\rVert_{2}\}$. A \textit{local} ML model is then trained over each subset using the pairs in $\mathcal{C}_{k}, k \in [K]$.
\subsection{Query-Answer Predictive Model}
Each aggregate result $y$ from the pair $(\mathbf{q},y) \in \mathcal{C}_{t}$ 
represents the $exact$ answer produced by the CS. Essentially, $y$ is produced by an unknown function $g$ which we wish to learn. Such function produces query-answers w.r.t an unknown distribution $p(y|\mathbf{q})$. Our aim is to approximate the \textit{true} functions $g$ for aggregate functions (descriptive statistics) e.g., count, average, max, sum etc. Regression algorithms are trained using query-answer pairs from $\mathcal{C}_{t}$ to minimize the expected prediction error between actual $y = g(\mathbf{q})$, from the true function $g$, and predicted $\hat{y}$ from an approximated function $\hat{g}$, i.e., $\mathbb{E}[(g(\mathbf{q})-\hat{g}(\mathbf{q}))^{2}]$.
After having partitioned the query space into clusters $\mathcal{C}_1,\ldots, \mathcal{C}_K$, we therein train $K$ local ML models, $\mathcal{M}=\{\hat{g}_1,\ldots, \hat{g}_K\}$ that associate queries $\mathbf{q}$ belonging to cluster $\mathcal{C}_{k}$ with their outputs $y$. Each ML model $\hat{g}_{k}$ is trained from query-response pairs $(\mathbf{q},y) \in \mathcal{C}_{t}$ from those queries $\mathbf{q}$ which belong to $\mathcal{C}_{k}$ such that $\mathbf{w}_{k}$ is the closest representative to those queries. The originally trained ML models in DC/CS are then sent to ADs to be used for predicting answers. 
Given a query $\mathbf{q}$ only the most representative model $\hat{g}_{k}$ is used for prediction, corresponding to the closest $\mathbf{w}_{k}$:
\begin{eqnarray} 
\hat{y} & = & \sum_{k=1}^{K}\mathcal{I}_{k}\hat{g}_{k}(\mathbf{q}) 
\end{eqnarray}
where $\mathcal{I}_{k} = 1$ if $\mathbf{w}_{k} = \arg\min_{\mathbf{w} \in \mathcal{W}}\lVert \mathbf{q}-\mathbf{w}\rVert^2_{2}$; 0 otherwise. 
\section{Query Pattern Change Detection}
\label{sec:cdm}
Suppose that all trained ML models $\{\hat{g}_{k}\}_{k=1}^{K}$ are delivered to the analysts from CS, indicating that 
the mechanism enters its \textit{prediction mode}. 
That is, for each incoming query, it predicts the answer and delivers it
back to the analysts \textit{without} query execution. If we assumed a
stationary query pattern distribution, in which queries and analysts' interests do not change, then this would suffice. 
However, this is not realistic as it is highly likely that \textit{analysts interests} change over time (e.g., during exploratory analytics tasks, which are considered as ad-hoc processes \cite{exploration}).
So, dynamic workloads may render the $\{\hat{g}_{k}\}_{k=1}^{K}$ models obsolete, as they were trained using \textit{past} query patterns following distributions which may now be different. 
Accommodating such dynamics 
is becoming increasingly important as ML is widely adopted in software in production \cite{sculley2015hidden}. Specifically,
when referring to analysts' interests, we refer to analysts who are tasked with informing different business decision processes. If those tasks change, the data subspaces to be analyzed become different, 
which results in changed query patterns. 
If models cannot be adaptive, expected prediction errors can become arbitrarily high. 
When $p(y|\mathbf{q})$ changes to $p'(y|\mathbf{q})$, it is highly likely that any previous approximation would produce high-error answers, unless $p(y|\mathbf{q}) \approx p'(y|\mathbf{q})$. 
We capture such dynamics as \textit{concept drift} \cite{tsymbal2004problem,ditzler2015learning} --  
many methods have been developed for adjusting when this arises \cite{gepperth2016incremental,ditzler2015learning}.

We introduce a Change Detection Mechanism (CDM) and an Adaptation Mechanism (ADM) (shown at Figure \ref{fig:overview_sys}) addressing this concern raising a number of challenges:
(1) How to detect a query pattern change; we need to enable triggers that alert the mechanism being in prediction mode in case of a \textit{concept drift};
(2) What kind of action should we take in case that happens, i.e., what strategy to follow for updating the ML models; (3) How should we notify users, analysts, and applications about such change(s) or even who to notify; shall we transmit an update to all users or just the affected ones? 
We explore these challenges and the  describe the decisions we take in tackling them in the remainder.

\subsection{Change Detection Mechanism}
So far, we have trained $K$ different local ML models to 
predict answers involving only the $k$-th model that best represents 
a new incoming query through the representative $\mathbf{w}_{k}$. 
This 
requires to 
individually monitor whether the query
representatives, used for prediction via their respective models, are \textit{still} representatives in long-term predictions
or whether the analysts' query patterns have changed. 
In this case, we need to introduce a CDM that triggers when the original query representative has significantly diverged from the estimated one. 

Our approach can be best understood by first assuming that the CDM maintains an on-line average of the prediction error $(y-\hat{y})^2$ such that : $u_{k} \approx \mathbb{E}[(y-\hat{y})^{2}|\mathbf{q}]$. This is done for each query representative $\mathbf{w}_{k}$ across different users. Should the expected error $u_{k}$ escalate significantly, then this may signal that a query pattern has shifted around the `region' represented by the representative $\mathbf{w}_{k}$.
But, recall that during the prediction mode, the actual $y$ is unknown since our goal is to predict accurate answers but without executing the query itself. Hence, we develop an approximation mechanism for change detection, not requiring query executions over CS/DC.

Once we have trained the individual ML models $\mathcal{M}$ and calculated their expected prediction accuracy (using an independent test sample drawn from the original set of queries) we obtain the Expected Prediction Error (EPE), which will be constant across all possible queries associated with a particular query representative defined as: $EPE = \mathbb{E}\Big[\big(g(\mathbf{q}) - \sum_{\kappa=1}^{K}\mathcal{I}_{\kappa}\hat{g}_{\kappa}(\mathbf{q})\big)^{2}\Big]$. Using the EPE, we wish to find a fine-grained estimate of the true prediction error rather just assuming this is constant for each and every unseen query.

To do this, we have analyzed the error behaviour under changing query patterns. 
Our findings reveal an interesting fact: The Euclidean distance $d(\mathbf{q},\mathbf{w}_{k}) = \lVert \mathbf{q}-\mathbf{w}_{k} \rVert^2_{2}$ of a random query $\mathbf{q}$ from its closest query representative $\mathbf{w}_{k}$ is strongly 
correlated with the 
associated prediction error $(y-\hat{y})^{2}$.\footnote{A 0.3 Pearson's Correlations was obtained on a real dataset.} 
Considering the correlation between $d(\mathbf{q},\mathbf{w}_{k})$ and the local $u_{k}$, we define a distance-based prediction error $\tilde{u}_{k}$ of a query $\mathbf{q}$ as:
\begin{equation}
\tilde{u}_i = \ln{(1 + d(\mathbf{q},\mathbf{w}_{k})-\min_{\mathbf{q}_{\ell} \in \mathcal{C}_{k}} d(\mathbf{w}_{k},\mathbf{q}_{\ell}))\cdot u_{k}},
\label{eq:error}
\end{equation}
where the natural-log operator acts as a penalizing/discount factor for queries given their distance from the closest representative $\mathbf{w}_{k}$. 
The second term within the natural-log operator, $\min_{\mathbf{q}_{\ell} \in \mathcal{C}_{k}} d(\mathbf{w}_{k},\mathbf{q}_{\ell})$ is the minimum distance between the query representative $\mathbf{w}_{k}$ and the associated 
queries $\mathbf{q} \in \mathcal{C}_{k}$. 
We subtract the minimum distance from $d(\mathbf{q},\mathbf{w}_{k})$ so that the scale of 
the numbers will not affect the computation of the error. 

We base our novel CDM in (\ref{eq:error}) using the series of error approximations $\{\tilde{u}_{t}\}$ for monitoring concept drifts in query patterns during prediction mode without executing the queries.

Consider the incoming unseen (random) queries $(\mathbf{q}_{0}, \mathbf{q}_{1}, \ldots, \mathbf{q}_t)$  arriving in a sequence $t \in \mathbb{T}$ during prediction mode. They are answered  by specific local ML models $(\hat{g_{0}}, \hat{g_{1}}, \ldots,\hat{g}_{k} )$, generating a series of distance-based error estimations $\{\tilde{u}_{t}\}$, $t \in \mathbb{T}$. The CDM monitors this series and, based on a specific threshold, signals 
the existence of \textit{concept drift}, i.e., checks whether the probability distribution of the queries has changed. Based on the series of error estimations, we learn two query distributions: (1) the \textit{expected} query distribution, which is represented by the query representatives and (2) the \textit{novel} query distribution, which cannot be represented by the current query representatives. The expected distribution $p_{0}(\tilde{u})$ is estimated given a training period from $\tilde{u}_{k}$ values corresponding to queries with closest representative $\mathbf{w}_{k}$. 
The novel distribution $p_{1}(\tilde{u})$ is estimated from $\tilde{u}_{m}$ values corresponding to error values derived from the \textit{rival} representatives $\mathbf{w}_{m}$ of queries with closest $\mathbf{w}_{m}$ and $k \neq m$. Based on this, we estimate the distribution of the error values generated from representatives which were not the closest to the queries, thus, approximating novel error values. Both distributions were approximated by fitting the $p(\tilde{u}) \sim \Gamma(e_{1},e_{2})$ distribution with scale $e_{1}$ and shape $e_{2}$. 

Given a $\tilde{u}_{t}$ value, we calculate the likelihood ratio $s_{t} = \log \frac{p_{1}(\tilde{u}_{t})}{p_{0}(\tilde{u}_{t})}$ and the cumulative 
sum of $s_{t}$ up to time $t$, 
$U_{t} =  \sum_{\tau=0}^{t}s_{\tau}.$
Based on the sequential ratio monitoring for a progressive concept drift in 
distribution \cite{grigg2003use} from $p_{0}$ to $p_{1}$, a decision function is introduced for signaling a potential concept drift expressed as: 
\begin{eqnarray}
G_{t} & = & U_{t} - \min_{0 \leq \tau \leq t}U_{\tau-1}.
\label{eq:G}
\end{eqnarray}
The decision function in (\ref{eq:G}) indicates the current cumulative sum of ratios minus its current minimum value. This denotes that the \textit{change time} estimate $t^{*}$ is the time following the current minimum of the cumulative sum, i.e., 
$t^{*}  =  \arg \min_{0 \leq \tau \leq t}U_{\tau}.$
Therefore, given that $U_{t} = U_{t-1} + s_{t}$, the decision function in (\ref{eq:G}) this is re-written in a recursive form:
$G_{t}  =  \{G_{t-1} + s_{t}\}^{+}$
with $\{z\}^{+} = \max(z,0)$ setting, by convention, $U_{-1} = 0$ and $G_{-1} = 0$. Hence, a concept drift of query patterns projected over 
the query representatives space is detected at time $t_{D}$:
$t_{D}  =  \min\{t \geq 0: G_{t} > h\}.$
The parameter $h$ is usually set $3\sigma \leq h \leq 5\sigma$ with $\sigma$ the standard deviation of $\tilde{u}$. The process is shown at Figure \ref{fig:buff-phase}, the cumulative sum of ratios exceeds the threshold $h$ as soon as queries are issued from an unknown distribution as the error estimates become steadily larger and are not just random fluctuations in errors.
It is worth noting that the change in query distribution is based on fusing the distance between the queries and their closest representatives scaled with the expected prediction error. We refer to this as an indication of degradation in the performance of the model. Given that a change has been detected,
the CDM signals the ADM which transits from \textit{prediction mode} to \textit{buffering mode} as shown at Figure \ref{fig:overview-cdm-adm}. As soon as a change is detected the CDM signals the ADM component, that new query patterns have been detected. In turn, the ADM signals the \textit{Prediction Component} (containing the $\mathcal{M}$ and $\mathcal{W}$) to be put in \texttt{BUFFERING} mode since the prediction component can no longer provide reliable answers for \textit{all} queries.  However, the AD can still leverage the complete system to ask queries following the already known distributions with \textit{only} queries following the new shifted distribution being executed at the CS. By entering \texttt{BUFFERING} mode our ADM starts to adjust for the new query patterns under the AD until \textit{converging}. At that point it signals the \textit{Prediction Component} to switch back to \texttt{PREDICTION} mode, resuming normal operation.

\begin{figure}
\begin{center}
\includegraphics[height=4cm,width=6cm]{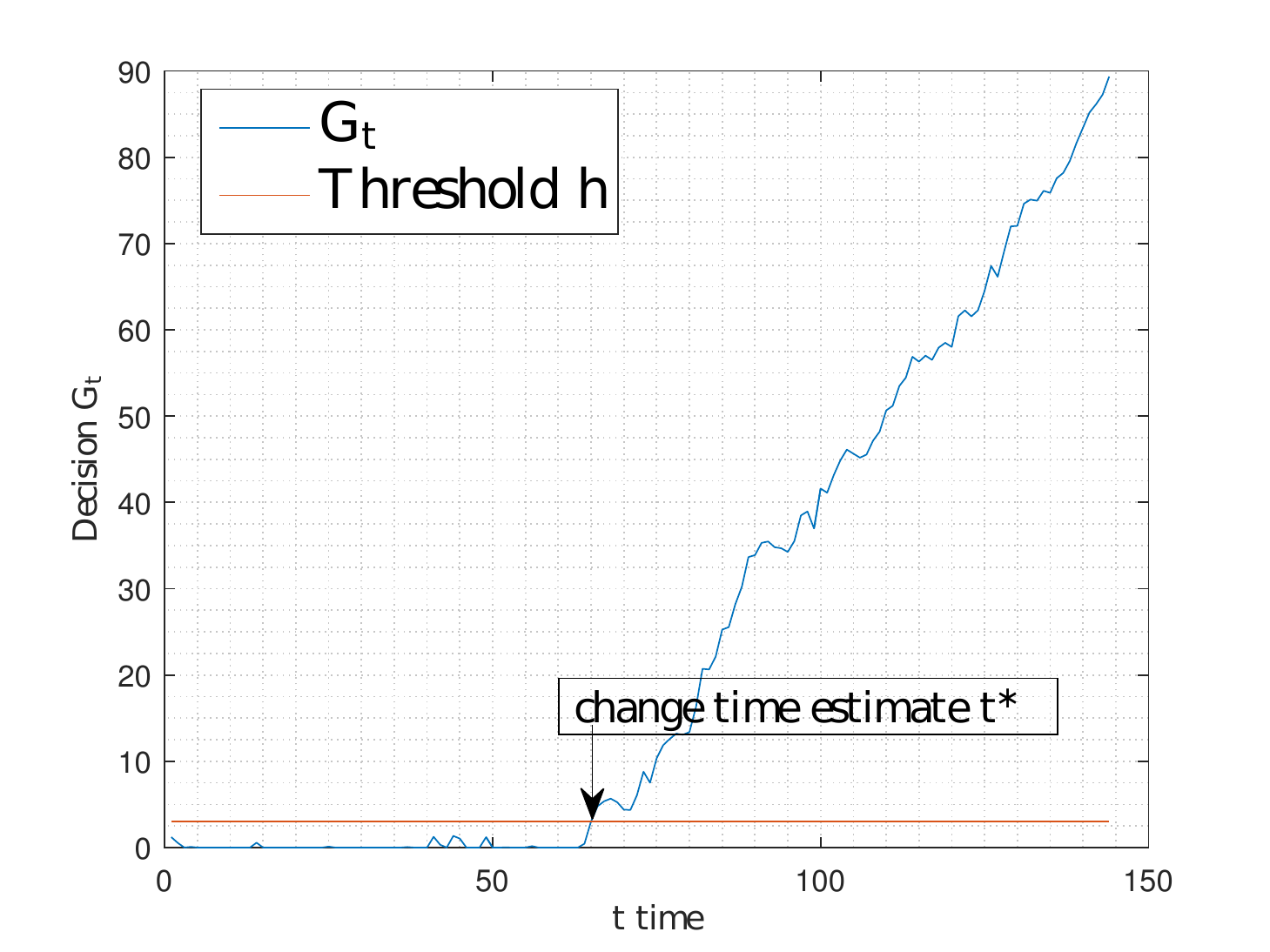}
\caption{Change detection based on the likelihood ratio of the distance-based error triggered when query patterns are shifted.}
\label{fig:buff-phase}
\end{center}
\end{figure}

\begin{figure}
\begin{center}
\includegraphics[height=4cm,width=6cm]{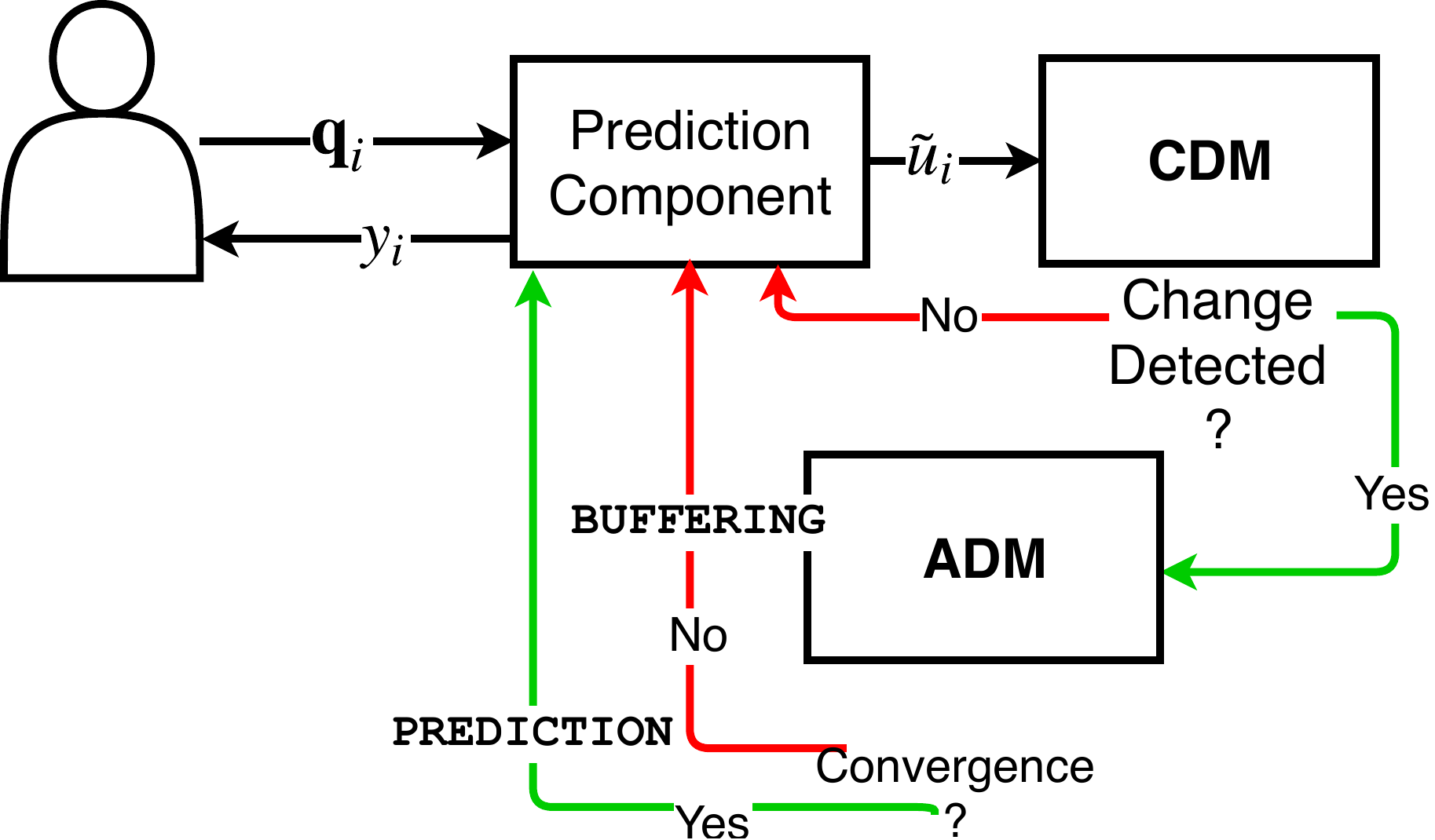}
\caption{Overview of the operation of CDM and ADM which control when estimations can be reliably given to the analyst}
\label{fig:overview-cdm-adm}
\end{center}
\end{figure}
\section{Model Adaptation}
\label{sec:adapt}
In this section, we explain the fundamentals of the ADM along with unintended 
results we can exploit. Once a local model $\hat{g}_k$ transits to buffering mode, 
it is deemed unreliable to accurately predict the answers to incoming queries. 
Therefore, during this phase, queries should be executed 
and their actual answers returned to analysts 
while also being used for adapting the 
model and representative $(\hat{g}_k, \mathbf{w}_k)$. In the beginning of  buffering mode, 
$\hat{g}_k$ and $\mathbf{w}_k$ are sent to CS, 
which will tune/adapt their parameters using 
the actual execution of queries. 
To reduce the expected number of queries executed in CS 
during buffering mode, we introduce 
a \textit{query execution selectivity} mechanism 
based on the current estimated error in (\ref{eq:error}).
Specifically, there would still be some queries issued by 
the AD that could be locally answered by 
current models cached in 
AD during that phase. 
Therefore, the AD still monitors incoming queries  and \textit{discriminates}
between two types: (1) the ones that can be locally answered by models in $\mathcal{M}$ and (2) the ones that cannot be answered, since these queries are not well represented by the cached query representatives. The latter queries are then forwarded to 
CS for execution. The selectivity mechanism relies on the 
following rule: an incoming query $\mathbf{q}_{t}$ at an AD 
in buffering mode is locally answered if the distance from 
the query representative of the new query patterns, notated by $\mathbf{w}_{K+1}$, is not the closest representative  i.e., $\mathbf{w}_{k^{*}} = \arg \min_{\mathbf{w}\in \mathcal{W}\cup \{\mathbf{w}_{K+1}\}} \lVert \mathbf{q}_{t}-\mathbf{w}\rVert_{2}$ and $k^* \neq K+1$. If the query is closest to the non-yet converged 
novel representative, then it is forwarded to CS for execution. 
However, since the novel representative $\mathbf{w}_{K+1}$ is not converged, we also consider the distance from its rival (second closest) converged representative as a backup. The rival representative can provide assistance and answer the query locally instead of forwarding it to the CS if it is close enough to include $\mathbf{q}_{t}$ in the range around 
its variance $\sigma^2$.\footnote{This is associated with the vigilance parameter in Adaptive Resonance Theory dealing with the bias-plasticity dilemma.} An example is shown at Figure \ref{fig:overview-adm}, queries $\mathbf{q}_t$ and $\mathbf{q}_{t+1}$ both have $\mathbf{w}_{k+1}$ as the closest representative. However, only $\mathbf{q}_{t+1}$ will be forwarded as $\mathbf{q}_{t}$ is within the radius of $\mathbf{w}_3$. The forwarding selectivity mechanism is also evident in Algorithm \ref{process:EG} (lines 5-10). Based on the centroid theorem of convergence in vector quantization, i.e., the converged $\mathbf{w}_{k}$ is the expected query (centroid) of those queries having $\mathbf{w}_{k}$ as their closest representative, we exploit the variance $\sigma^{2} = \frac{1}{|\mathbb{Q}_{k}|}\sum_{\mathbf{q} \in \mathbb{Q}_{k}}\lVert\mathbf{q}-\mathbf{w}_{k} \rVert^{2}_{2}$ for activating the forwarding rule. The rule is based on the rival centroid and is fired if $\lVert \mathbf{q}_{t} - \mathbf{w}_{k} \rVert_{2} > \lambda \sigma$ for any scalar $\lambda > 0$ given that the query is closest to the non-converged $\mathbf{w}_{K+1}$. 
The probability of forwarding incoming queries 
from the AD to CS for execution, given that they cannot be reliably answered by the local model $\hat{g}_k$ is upper bounded 
as provided in Theorem \ref{theorem:upperbound}.
The query is  executed if inevitably 
the rival representative cannot be used for
prediction since $\lVert \mathbf{q}_{t} - \mathbf{w}_{K+1} \rVert_{2} > \lambda \sigma$. The value of $\lambda$ is adopted from 
the scaling factor of $h$, i.e., $3 \leq \lambda \leq 5$.
\begin{figure}
\begin{center}
\includegraphics[height=4cm,width=5cm,keepaspectratio]{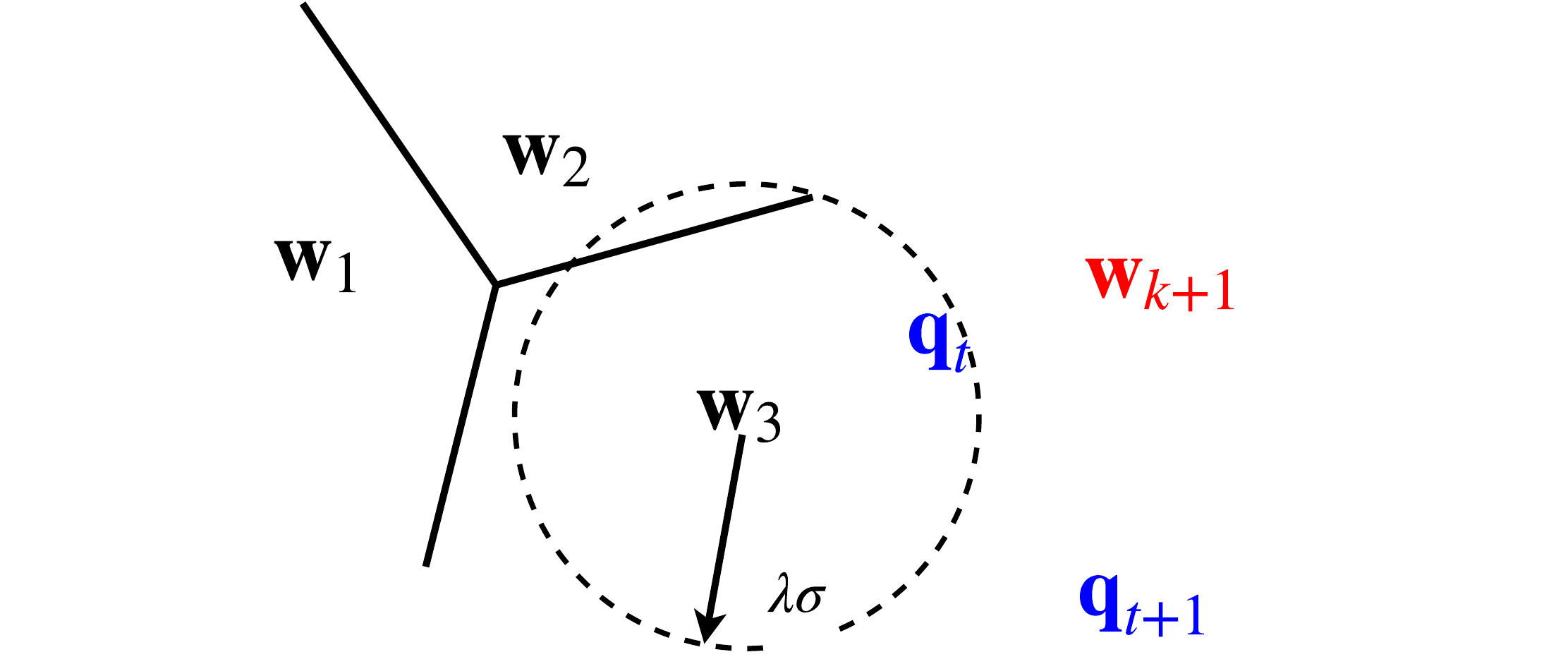}
\caption{Demonstration of the \textit{forwarding rule} by the ADM. Both queries (in blue) have $\mathbf{w}_{k+1}$ (the non-converged) as the closest representative. Only one of them is send to the CS}
\label{fig:overview-adm}
\end{center}
\end{figure}
\begin{theorem}
Given a random query $\mathbf{q}$ whose distance from its rival (second closest) representative $\mathbf{w}_{k}$ is greater than $\lambda \sigma$, the upper bound of the forwarding probability for query execution is $O(\frac{1}{\lambda^{2}})$.
\label{theorem:upperbound}
\end{theorem}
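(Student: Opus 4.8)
The plan is to recognize the forwarding decision as a tail event for the squared distance $\lVert \mathbf{q}-\mathbf{w}_{k} \rVert_{2}^{2}$ and to bound it by a Markov/Chebyshev argument, using the fact (already invoked in the text) that a converged vector–quantization representative is the centroid of the queries assigned to it.

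First I would fix the rival converged representative $\mathbf{w}_{k}$ and recall the centroid theorem of convergence: upon convergence $\mathbf{w}_{k} = \mathbb{E}[\mathbf{q}\mid \mathbf{q}\in\mathbb{Q}_{k}]$, i.e.\ $\mathbf{w}_{k}$ is the (conditional) mean of the queries having $\mathbf{w}_{k}$ as closest representative. Consequently the scalar $\sigma^{2} = \frac{1}{|\mathbb{Q}_{k}|}\sum_{\mathbf{q} \in \mathbb{Q}_{k}}\lVert\mathbf{q}-\mathbf{w}_{k} \rVert^{2}_{2}$ is precisely the total variance of $\mathbf{q}$ about its mean, i.e.\ $\sigma^{2} = \mathbb{E}\big[\lVert\mathbf{q}-\mathbf{w}_{k}\rVert_{2}^{2}\big]$, the trace of the covariance matrix of $\mathbf{q}$. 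Next I would note that a query is forwarded to the CS only when the selectivity rule fires, and the rule requires in particular $\lVert \mathbf{q}-\mathbf{w}_{k}\rVert_{2} > \lambda\sigma$ (the rival centroid cannot "cover" $\mathbf{q}$ within its radius). Hence the forwarding event is contained in $\{\lVert \mathbf{q}-\mathbf{w}_{k}\rVert_{2} > \lambda\sigma\}$, so the forwarding probability is at most $P(\lVert \mathbf{q}-\mathbf{w}_{k}\rVert_{2} > \lambda\sigma)$.

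It then remains to square both sides inside the probability (both quantities are nonnegative) and apply Markov's inequality to the nonnegative random variable $\lVert \mathbf{q}-\mathbf{w}_{k}\rVert_{2}^{2}$, which is the multivariate Chebyshev inequality here:
\begin{align*}
P\big(\lVert \mathbf{q}-\mathbf{w}_{k}\rVert_{2} > \lambda\sigma\big)
 &= P\big(\lVert \mathbf{q}-\mathbf{w}_{k}\rVert_{2}^{2} > \lambda^{2}\sigma^{2}\big) \\
 &\le \frac{\mathbb{E}\big[\lVert \mathbf{q}-\mathbf{w}_{k}\rVert_{2}^{2}\big]}{\lambda^{2}\sigma^{2}}
 = \frac{\sigma^{2}}{\lambda^{2}\sigma^{2}} = \frac{1}{\lambda^{2}}.
\end{align*}
Therefore the forwarding probability is upper bounded by $1/\lambda^{2} = O(1/\lambda^{2})$; plugging in the adopted range $3\le\lambda\le5$ gives the concrete $4\%$–$11\%$ figures used operationally.

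The main obstacle is not the inequality but making the probability space precise: the theorem conditions the event on $\mathbf{q}$ being closest to the non‑converged representative $\mathbf{w}_{K+1}$, whereas the centroid identity and the variance $\sigma^{2}$ belong to the rival cluster $\mathbb{Q}_{k}$, so $\mathbf{q}$ is strictly speaking drawn from a different sub‑population. I would resolve this by phrasing the bound conditionally — conditioned on the rival of $\mathbf{q}$ being $\mathbf{w}_{k}$ — and arguing that further conditioning on "$\mathbf{w}_{K+1}$ is the closest representative" only shrinks the forwarding event, so the stated order is preserved; alternatively one works with $\mathbf{q}$ drawn from the full query mixture and treats $(\mathbf{w}_{k},\sigma)$ as the first/second‑moment summary that actually enters the rule. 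A secondary, routine point is that the empirical $\sigma^{2}$ (a sample average over $\mathbb{Q}_{k}$) substitutes for the population second moment; under the stationarity/i.i.d.\ assumptions already used for the clustering this substitution is standard and does not affect the $O(1/\lambda^{2})$ rate.
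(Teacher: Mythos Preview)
Your proposal is correct and follows essentially the same route as the paper: both identify the forwarding event with $\lVert \mathbf{q}-\mathbf{w}_{k}\rVert_{2}>\lambda\sigma$ for the rival (converged) centroid $\mathbf{w}_{k}$ and bound its probability by Chebyshev's inequality to obtain $1/\lambda^{2}$. Your write-up is in fact more careful than the paper's, which simply invokes Chebyshev without spelling out the Markov-on-the-squared-distance step or addressing the conditioning subtlety (that $\mathbf{q}$ is drawn from the population closest to $\mathbf{w}_{K+1}$ while the moments $\mathbf{w}_{k},\sigma^{2}$ belong to the rival cluster); the paper glosses over this point entirely.
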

\begin{proof}
Let the query $\mathbf{q}$ being projected to its closest representative $\mathbf{w}_{K+1}$, which is not yet converged and let its second closest be the converged $\mathbf{w}_{k}$. The representative  $\mathbf{w}_{k}$ corresponds to the mean vector of those queries belonging in the cluster $\mathcal{C}_{k}$. In order to the query $\mathbf{q}$ to be forwarded to the CS for execution it means that the $\mathbf{w}_{k}$ should not be the mean vector for the incoming query $\mathbf{q}$. This is indicated if the distance $\lVert \mathbf{q} - \mathbf{w}_{k} \rVert_{2}$ is greater than a proportion of the norm of the variance $\sigma$ of the cluster $\mathcal{C}_{k}$ by a factor $\lambda > 1$. Hence, the query is sent from the AD to the CS if at least this distance is greater than $\lambda \sigma$, which is stochasitcally bounded by the factor $1/\lambda^{2}$ based on Chebyshev's inequality $P(\lVert \mathbf{q}-\mathbf{w}_{k} \rVert_{2}) \geq \lambda \sigma \leq \frac{1}{\lambda^{2}}$.
\end{proof}



 
\subsection{Taking advantage of Affiliates}
In the ADM, we take advantage of the tuning process taking place in the CS. We exploit queries coming from the original AD which triggered the CDM and other queries coming from different ADs also in buffering mode due to some other \textit{independent} triggered CDMs. 
We call these potential ADs, \textit{affiliates} belonging to set $\mathcal{A}$,
since their executed queries and actual answers are used for tuning the stale models. 
Let $n$ ADs be connected to CS and assume that each 
AD system $j$ referring to model $\mathcal{M}_{j}$ enters its 
buffering mode independently of the others with 
entry probability $\beta_{j} = P(G_{j} > h) \in (0,1)$. 
Then, the probability of an AD (being in buffering mode) 
to \textit{meet} at least one affiliate $j$ in CS is $(1-\prod_{j=1}^{n-1}(1-\beta_{j}))$. 
The expected number of affiliates is then approximated by 
$\mathbb{E}[|\mathcal{A}|] \approx \beta(n-1)$ under the assumption that 
the entry probabilities are almost the same $\beta_{j} = \beta, \forall j$. This expectation will be used for studying the knowledge expansion in terms of novel query patterns being delivered to an AD via our reciprocity adaptation mechanism.

\begin{algorithm}
\SetKwInOut{Input}{Input}
\SetKwInOut{Output}{Output}
\caption{Reciprocity-based Model Adaptation in CS}
 \label{process:EG}
   CS receives (copy) of $\mathcal{M}$ and $\mathcal{W}$ from AD\;
   Set buffer $\mathcal{Q} = \emptyset$, affiliate buffer $\mathcal{Q}_{\mathcal{A}} = \emptyset$\;
   \While{\texttt{MODE} = \texttt{BUFFERING}}{
   Prediction Component receives query $\mathbf{q}_{t}$ from AD\;
   $\mathbf{w}_{k^{*}} = \arg \min_{\mathbf{w}\in \mathcal{W}\cup \{\mathbf{w}_{K+1}\}} \lVert \mathbf{q}_{t}-\mathbf{w}\rVert_{2}$ /*closest*/\;
   $\mathbf{w}_{k} = \arg \min_{\mathbf{w}\in \mathcal{W}\cup \{\mathbf{w}_{K+1}\}-\{\mathbf{w}_{k^*}\}} \lVert \mathbf{q}_{t}-\mathbf{w}\rVert_{2}$, /*rival*/\;
   \eIf{$\lVert \mathbf{q}_{t} - \mathbf{w}_{k}\rVert_{2} > \lambda \sigma$ and $\mathbf{w}_{K+1} = \mathbf{w}_{k^{*}}$}
   {
     send query $\mathbf{q}_{t}$ to CS for execution\;
   $\mathcal{Q} = \mathcal{Q} \cup \{(\mathbf{q}_{t},y_{t})\}$ /*actual query-answer pair*/ \;
   adapt prototype $\mathbf{w}_{K+1}$ \;
    }
    {$\hat{y} = \hat{g}(\mathbf{q}_{t})$ /*prediction*/\;}
   \For{affiliate $j \in \mathcal{A}$ }
   {receive affiliate query $\mathbf{q}_{j}$\;
    \If{$\lVert \mathbf{q}_{j} - \mathbf{w}_{k}\rVert_{2} \leq \lambda \sigma$} 
    {
       $\mathcal{Q}_{\mathcal{A}} = \mathcal{Q}_{\mathcal{A}} \cup \{(\mathbf{q}_{j},y_{j})\}$ /*affiliate pair*/ \;
    }
   }
  update learning rate $\gamma$\;
  \If{convergence w.r.t. $c$}
   {
   train new model $\hat{g}_{K+1}$ using $\mathcal{Q}$\; 
   }
   $\mathcal{M} = \mathcal{M} \cup \{\hat{g}_{K+1},\hat{g}_{j}\}, \mathcal{W} = \mathcal{W} \cup \{\mathbf{w}_{K+1},\mathbf{w}_{j}\}, \forall j \in \mathcal{A}$\;
   $\mathcal{M}$ and $\mathcal{W}$ are sent to AD from CS\;
   set \texttt{MODE} = \texttt{PREDICTION}
   }
\end{algorithm}

\subsection{Model Adaptation \& Reciprocity}
In the CS, when a query is selectively forwarded 
from AD, the process of model adaptation has as follows: 
for adapting to new query patterns, 
we rely on the principle of \textit{explicit partitioning} 
\cite{tsymbal2004problem,ditzler2015learning}, 
as a natural extension of our strategy using an ensemble of local ML models. To adjust to new query patterns, 
we train a new model $\hat{g}_{K+1}$ using 
 executed queries and their answers in CS. 
This is the optimal strategy for expanding the 
current $\mathcal{M}$ as other methods might lead to \textit{catastrophic forgetting} \cite{gepperth2016incremental}. 
Indicatively, such methods adopt strategies to adapt the 
current model by adjusting to new patterns 
whilst forgetting the old ones. In our context, 
this is not applicable since analysts have the flexibility to issue queries either conforming to the \textit{old} patterns or to the new ones, depending on the analytics process.  

The adaptation process is performed with parameters: the 
$K$ query prototypes $\mathcal{W}$ and their associated ML models $\mathcal{M}$ as shown in Algorithm \ref{process:EG}. Recall that the analyst's device has cached models $\mathcal{M}$ and the 
DC/CS adapts the received parameters by learning the new underlying query patterns and based on these
trains the new ML model. Let the queries series $\{\mathbf{q}_{1}, \mathbf{q}_{2}, \ldots \}$ coming from the AD to CS based on selective forwarding. This means that most likely a query $\mathbf{q}_{t}$ conforms to new query patterns thus sent to CS for execution. Once query $\mathbf{q}_{1}$ is executed and its actual answer $y_{1}$ is obtained, it is then considered as a new (initial) representative $\mathbf{w}_{K+1}$ for $\mathcal{M}_{K+1}$. The pairs $(\mathbf{q}_{t},y_{t})$ are then used to incrementally update $\mathbf{w}_{K+1}$ and then buffered in $\mathcal{Q}$, which will be the training set for $\hat{g}_{K+1}$ (lines 4-13).  The adaptation of $\mathbf{w}_{K+1}$ to follow the new query pattern is achieved by Stochastic Gradient Descent (SGD) \cite{bottou2012stochastic}, which is widely used in statistical learning for training in an on-line manner considering one training example (query-answer) at a time. We focus on the convergence of the query distribution by moving the new query representative towards the estimated median of the queries in $\mathcal{Q}$ and not the corresponding centroid. 
This is introduced so that the new representative converges to a robust statistic, free of outliers and more reliable than the centroid (mean vector). The convergence to the median denotes with high reliability \textit{convergence to the distribution}, which is what we desire for model convergence. In this case, we provide the adaptation rule of the new query representative to converge to the median of the forwarded queries, as provided in Theorem \ref{theorem:median}.

\begin{theorem}
The novel representative $\mathbf{w}_{K+1}$ converges to the median vector of the queries executed in the DC w.r.t. update rule $\Delta \mathbf{w}_{K+1} \propto \gamma sgn(\mathbf{q}-\mathbf{w}_{K+1})$, $\gamma \in  (0,1)$; $sgn(\cdot)$ is the signum function. 
\label{theorem:median}
\end{theorem}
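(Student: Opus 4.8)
The plan is to recognize the stated update rule as a stochastic sub-gradient descent step on the expected $L_1$ deviation between the novel representative and the forwarded queries, and then to invoke a standard stochastic-approximation convergence argument. First I would fix the novel query distribution $p_{1}(\mathbf{q})$ governing the queries executed in the CS (buffered in $\mathcal{Q}$) and define the objective $J(\mathbf{w}) = \mathbb{E}_{\mathbf{q}\sim p_{1}}[\lVert \mathbf{q}-\mathbf{w}\rVert_{1}] = \sum_{i=1}^{2d}\mathbb{E}[\,|q_{i}-w_{i}|\,]$. This function is convex and separable across coordinates, so the analysis reduces to one coordinate at a time.

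Next I would compute the (sub)gradient. For coordinate $i$, $\frac{\partial}{\partial w_{i}}\mathbb{E}[\,|q_{i}-w_{i}|\,] = -\,\mathbb{E}[\,sgn(q_{i}-w_{i})\,] = 2F_{i}(w_{i})-1$, where $F_{i}$ is the marginal CDF of $q_{i}$ under $p_{1}$ (at points of continuity; at an atom one takes the subdifferential $[\,2F_{i}(w_{i}^{-})-1,\ 2F_{i}(w_{i})-1\,]$). Setting this to zero gives $F_{i}(w_{i}) = \tfrac{1}{2}$, i.e. the unique stationary point of $J$ is the coordinate-wise median vector $\mathbf{m}$ of the executed queries, which by convexity is the global minimizer of $J$; uniqueness holds provided each $F_{i}$ is strictly increasing in a neighbourhood of its median. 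Then I would observe that, with $\mathbf{q}\sim p_{1}$, the vector $sgn(\mathbf{q}-\mathbf{w})$ is an unbiased estimator of $-\nabla J(\mathbf{w})$ and is bounded, $\lVert sgn(\mathbf{q}-\mathbf{w})\rVert_{2}\le\sqrt{2d}$, so the stochastic-gradient noise has uniformly bounded second moment.

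With these ingredients, the recursion $\mathbf{w}_{K+1}\leftarrow \mathbf{w}_{K+1} + \gamma\, sgn(\mathbf{q}-\mathbf{w}_{K+1})$ is precisely a Robbins--Monro scheme for the root of $\nabla J$. Invoking the classical convergence theorem for stochastic approximation of a convex objective --- with an annealed gain schedule $\gamma_{t}$ satisfying $\sum_{t}\gamma_{t}=\infty$ and $\sum_{t}\gamma_{t}^{2}<\infty$ (this is exactly what the ``update learning rate $\gamma$'' step of Algorithm~\ref{process:EG} realises; for a fixed small $\gamma$ one instead gets convergence to an $O(\gamma)$-neighbourhood of $\mathbf{m}$ in expectation, shrinking as $\gamma\to 0$) --- the iterates converge (almost surely, hence in mean) to $\mathbf{m}$, the median vector of the queries executed in the DC, which is the claim. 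The ``centroid theorem'' analogy alluded to in the text then carries over with the $L_{2}$ loss replaced by the $L_{1}$ loss.

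The main obstacle I anticipate is the non-smoothness of $\lVert\cdot\rVert_{1}$ at coordinates where $w_{i}$ coincides with an atom of $F_{i}$: there $sgn(\cdot)$ is set-valued and $J$ is only sub-differentiable, so one must either assume $p_{1}$ is (coordinate-wise) continuous --- a mild and natural assumption for real-valued range-predicate bounds --- or pass to the differential-inclusion / ODE form of stochastic approximation, where the limit set is characterised via $\dot{\mathbf{w}}\in -\partial J(\mathbf{w})$ and still collapses to $\{\mathbf{m}\}$ as long as no $F_{i}$ has a flat stretch at height $\tfrac{1}{2}$ around its median. A secondary subtlety is the gain schedule: a constant $\gamma$ only yields a neighbourhood of $\mathbf{m}$, so the theorem should be stated (and proved) under a diminishing $\gamma_{t}$, which is the regime the algorithm uses in practice.
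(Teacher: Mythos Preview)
Your proposal is correct, and the key identity it rests on --- $\mathbb{E}[sgn(q_i - w_i)] = 1 - 2F_i(w_i)$, vanishing exactly at the coordinate-wise median --- is the same one the paper uses. The route, however, is genuinely different. The paper's proof is purely a fixed-point characterisation: it \emph{assumes} the representative has reached equilibrium ($\Delta \mathbf{w}_{K+1} = \mathbf{0}$ with probability $1$), takes expectations of the update, and reads off that any equilibrium must satisfy $P(q_i \geq w_{K+1,i}) = \tfrac{1}{2}$, i.e.\ be the median. It does not establish convergence, only that the median is the stationary point of the expected update. Your approach is more complete: by recognising the rule as stochastic sub-gradient descent on the convex, separable objective $J(\mathbf{w}) = \mathbb{E}[\lVert \mathbf{q}-\mathbf{w}\rVert_{1}]$ and invoking Robbins--Monro with the diminishing $\gamma_t$ schedule the algorithm actually employs, you obtain an actual convergence guarantee rather than presupposing one, and you correctly surface the regularity conditions (continuity of $F_i$ near its median, the $\sum\gamma_t=\infty$, $\sum\gamma_t^2<\infty$ requirement) that make the argument rigorous. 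The paper's version is shorter and heuristic; yours buys a bona fide proof of convergence at the cost of a standard stochastic-approximation citation.
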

\begin{proof}
Each dimension $i$ of the median vector $\mathbf{m}$ of the 
queries $\mathbf{q}$ in a sub-space satisfies: 
$P(q_{i} \geq m_{i}) = P(q_{i} \leq m_{i}) = \frac{1}{2}$. 
Suppose that the new representative $\mathbf{w}_{K+1}$ has 
reached equilibrium, i.e., $\Delta \mathbf{w}_{K+1} = \mathbf{0}$ holds with probability 1.
By taking the expectations of both sides 
of the update rule $\mathbb{E}[\Delta \mathbf{w}_{K+1}] = \alpha \mathbb{E}[sgn(\mathbf{q}-\mathbf{w}_{K+1})] = \mathbf{0}$ and focusing on each dimension $i$, we obtain that: $\int sgn(q_{i}-w_{K+1,i})p(q_{i})dq_{i} = P(q_{i} \geq w_{K+1,i})\int p(q_{i})dq_{i} - P(q_{i} < w_{K+1,i})\int p(q_{i})dq_{i} = 2P(q_{i} \geq w_{K+1,i}) - 1$. Since $\mathbb{E}[\Delta w_{K+1,i}]=0$ is constant, then $P(q_{i} \geq w_{K+1,i}) =  \frac{1}{2}$, which denotes that $w_{K+1,i}$ converges to the median of $q_{i}$, $\forall i$.
\end{proof} 

Given the $j$-th incoming query $\mathbf{q}_{j}$ issued by analysts in the $j$-th affiliate AD, the CS assesses the selectivity forwarding criterion: $\lVert \mathbf{q}_{j} - \mathbf{w}_{k}\rVert_{2} \leq \lambda \sigma$. If it holds, these affiliate queries are exploited for 
expanding the query patterns (lines 14-18). 
That is, CS buffers the pairs $(\mathbf{q}_{j},y_{j})$ 
in affiliate set $\mathcal{Q}_{\mathcal{A}}$ which will be used later for 
training new ML models $\hat{g}_{j}$ enriching the 
predictability variety of $\mathcal{M}$. 
In this case, we obtain the affiliate new representative $\mathbf{w}_{j}$ generated by query patterns coming from the affiliate AD $j \in \mathcal{A}$. Similarly to the new $\mathbf{w}_{K+1}$, affiliate $\mathbf{w}_{j}$ is incrementally adapted through SGD with 
the aim to converge to the corresponding 
median of the affiliate queries in $\mathcal{Q}_{ \mathcal{A}}$. For the new $\mathbf{w}_{K+1}$ and the possibly affiliate $\mathbf{w}_{j}$, the median convergence rule involves the learning rate: 
\begin{eqnarray}
\gamma & = & \frac{1}{1+|\mathcal{Q}| +|\mathcal{Q}_{\mathcal{A}}|},
\end{eqnarray}
which decreases as more queries are appended to $\mathcal{Q}$ and $\mathcal{Q}_{\mathcal{A}}$; the higher the number of affiliate query representatives, the faster the convergence to the median. 
This demonstrates the exploitation of affiliates to the adaptation 
of $\mathcal{M}$. 

The convergence of the representatives is checked by the subsequent adjustments in positions that $\mathbf{w}_{K+1}$ makes. If that change is lower than a threshold $c$ then convergence has been achieved. After the convergence of the query representative and affiliates (if any), the CS trains the new models $\hat{g_{K+1}}$ and $\{\hat{g}_{j}\}$, using the $\mathcal{Q}$ and $\mathcal{Q}_{\mathcal{A}}$, respectively. The new ML models and new representatives are then delivered to AD (lines 20-26). 
Evidently, set $\mathcal{W}$ is now expanded with one more 
representative $\mathbf{w}_{K+1}$ and on average $(n-1)\beta$ affiliate representatives along with their regression models. The adapted and updated $\mathcal{M}$ is expected to have $K + 1 + (n-1)\beta$ query representatives and ML models at the end of the buffering phase. 

\subsection{Convergence to an Offline Mode}
When the system transits from the buffering 
to prediction mode, the enhancement of $\mathcal{M}$ and 
$\mathcal{W}$ gradually decreases the probability to 
enter the buffering mode in the future, in light of 
learning the query patterns not only derived from 
the analysts interacting with the CS/DC, but also the patterns from other analysts in other ADs. This indicates
that the gradually expanding sets reflect the analysts' 
way of exploring and analyzing data among data centers. 
Because of this expansion, the transition probability from prediction to buffering mode gradually decreases saving computational and communication resources at the network and CS. 
The expected ratio of new models in an AD transiting from 
the $m$-th buffering mode with $K_{m}$ representatives to 
the $(m+1)$-th prediction mode is: $1 + \frac{1+\beta(n-1)}{K_{m}}$, with 
$\mathbb{E}[\Delta K_{m}] = \mathbb{E}[K_{m+1}-K_{m}] = 1 + \beta(n-1)$.
Such ratio increases to unity after certain prediction-buffering transitions denoting that \textit{all} query sub-spaces are known: 
\begin{eqnarray}
\lim_{m \to \infty}(1+\frac{1+\beta(n-1)}{m}) = 1.
\end{eqnarray}
An AD model then learns \textit{all} possible query sub-spaces 
via its analysts and affiliate models with rate $O(\frac{1}{m})$. The entry probability to buffering mode decreases with the same rate, thus, reducing the CS execution overhead and communication load by 
transiting the AD to 'offline' mode. This is the advantage of the query-driven analytics over dynamic workloads with 
the expected query execution rate in CS being bounded:
\begin{theorem}
The expected query execution rate in the CS is bounded 
by $O(\frac{1}{\lambda^{2}}(2-(1-\beta)^{n-1}))$.
\label{theorem:execution}
\end{theorem}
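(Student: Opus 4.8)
The plan is to decompose the expected query-execution load that the CS incurs during a buffering episode into two streams --- the queries forwarded by the host AD that triggered the CDM, and the queries contributed by the affiliate ADs in $\mathcal{A}$ that happen to be in buffering mode simultaneously --- bound each stream separately, and add the two bounds by linearity of expectation. Write the expected rate as $R = R_{\mathrm{host}} + R_{\mathrm{aff}}$.

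First I would bound $R_{\mathrm{host}}$. By the selectivity/forwarding rule of Algorithm~\ref{process:EG} (lines 5--10), an incoming query $\mathbf{q}_t$ at the host AD is actually sent to the CS for execution only when its closest representative is the non-converged $\mathbf{w}_{K+1}$ \emph{and} its distance from the rival (second-closest, already converged) representative $\mathbf{w}_k$ exceeds $\lambda\sigma$. Dropping the first, conjunctive, condition can only inflate the probability, so $R_{\mathrm{host}} \le P(\lVert \mathbf{q}_t - \mathbf{w}_k \rVert_2 \ge \lambda\sigma)$, and Theorem~\ref{theorem:upperbound} (Chebyshev's inequality applied to cluster $\mathcal{C}_k$, whose converged centroid is $\mathbf{w}_k$ with variance $\sigma^2$) yields $R_{\mathrm{host}} = O(\frac{1}{\lambda^2})$.

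Next I would bound $R_{\mathrm{aff}}$. Affiliate-originated executions can occur only if the host AD ``meets'' at least one affiliate in the CS; under the homogeneity assumption $\beta_j = \beta$ this event has probability $1 - \prod_{j=1}^{n-1}(1-\beta_j) = 1 - (1-\beta)^{n-1}$, exactly as in the derivation of $\mathbb{E}[|\mathcal{A}|]$ above. Conditioned on the affiliate channel being active, each affiliate query is screened by the very same rival-distance test $\lVert \mathbf{q}_j - \mathbf{w}_k \rVert_2 \le \lambda\sigma$ (Algorithm~\ref{process:EG}, lines 14--18), so the per-query probability that it adds to the CS execution load is again $O(\frac{1}{\lambda^2})$ by the identical Chebyshev argument (the affiliate cluster's converged representative is likewise a centroid with bounded variance $\sigma^2$). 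Hence $R_{\mathrm{aff}} = O\big(\frac{1}{\lambda^2}(1-(1-\beta)^{n-1})\big)$, and adding the two contributions gives $R = O(\frac{1}{\lambda^2}) + O\big(\frac{1}{\lambda^2}(1-(1-\beta)^{n-1})\big) = O\big(\frac{1}{\lambda^2}\,(2-(1-\beta)^{n-1})\big)$, the claimed bound (the constant $2$ being the host coefficient $1$ plus the affiliate-channel probability, itself at most $1$).

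The main obstacle I anticipate is the third step: making precise why the \emph{aggregate} affiliate load collapses to a single $O(\frac{1}{\lambda^2})$ term weighted by the ``at least one affiliate'' probability, rather than scaling linearly with $\mathbb{E}[|\mathcal{A}|] \approx \beta(n-1)$. This hinges on how ``execution rate'' is normalized --- treating it as a per-query-slot probability, or equivalently folding the union of affiliate forwarding streams into one effective channel --- together with the independence of the per-AD buffering-entry events, which is what licenses the product $\prod_j(1-\beta_j)$. A secondary point to verify is that the Chebyshev bound of Theorem~\ref{theorem:upperbound} transfers verbatim to affiliate clusters, i.e., that each affiliate's converged rival representative is the mean of a cluster with finite variance, so the same $1/\lambda^2$ factor holds uniformly across host and affiliates.
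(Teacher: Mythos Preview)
Your proposal is correct and follows essentially the same approach as the paper: decompose the load into the host-AD stream (bounded by $O(1/\lambda^{2})$ via Theorem~\ref{theorem:upperbound}) and the affiliate stream (the same Chebyshev factor multiplied by the ``at least one affiliate'' probability $1-(1-\beta)^{n-1}$), then add. The paper's own proof is terser and does not flag the normalization issue you raise in your final paragraph --- it simply asserts the affiliate contribution as a single $O(1/\lambda^{2})$ channel weighted by the existence probability --- so your self-critique there is warranted but does not diverge from the paper's argument.
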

\begin{proof}
Consider that at a certain time instance, there are $n-1$ affiliate ADs which enter in their buffering mode with entry probability $\beta$. Then, the
probability of existing at least one affiliate of an AD in the CS in the buffering mode is $1-(1-\beta)^{n-1}$. Given that a query is forwarded with upper probability $O(\frac{1}{\lambda^{2}})$ for those ADs being in the buffering mode, then, the expected number of queries being executed in the CS from an AD and its affiliates within any arbitrary time interval $T$ is $T(\frac{1}{\lambda^{2}})(1 + 1-(1-\beta)^{n-1})$.
\end{proof}
\section{Evaluation Results}
\label{sec:evaluation}
The main questions we are striving to answer in our evaluation are the following :
\begin{enumerate}
    \item How accurate are the given predictions for a variety of aggregate queries ?
    \item Is there a single ML model that can be used for this purpose ?
    \item What are the effects of predicates and data dimensionality (number of columns) on estimating the results of aggregate queries ?
    \item How light-weight and efficient are the models and can they be stored on ADs for efficient execution
    \item How effective are the CDM/ADM mechanisms and what is the effect of continuously learning and adapting to new queries ?
\end{enumerate}{}
\subsection{Implementation \& Experimental Environment}
To implement our algorithms we used \texttt{scikit-learn} , \texttt{XGBoost}\cite{chen2016xgboost} and an implementation of the \texttt{Growing-Networks} algorithm \cite{marsland2002self}. We performed our experiments on a desktop machine with a Intel(R) Core(TM) i7-6700 CPU @ 3.40GHz and 16GB RAM. For the real datasets, the GrowingNetworks algorithm was used for clustering mainly because of its invariance to selecting a pre-defined number of clusters and it's ability to naturally grow (as required by our adaptability mechanisms). XGBoost was used as the supervised learning algorithm because of its superior accuracy to other algorithms we tested and shown as part of our evaluation. 

\textbf{Real datasets}: We use the \textbf{Crimes} dataset from \cite{crimesdata} and the \textbf{Sensors} dataset from \cite{sensors}. The Crimes dataset contains $|\mathcal{R}_1| = 6.6\cdot{10^6} $ and the Sensors dataset $|\mathcal{R}_2| = 2.3 \cdot{10^6}$ data vectors. We created synthetic query workloads over these datasets as real query workloads do not exist for this purpose as also attested by \cite{wasay2017data}. 
For Crimes, we generated predicates restricting the spatial dimension and for Sensors the temporal dimension as essentially this is what analysts would be doing in exploration tasks. For the predicates in the spatial dimension we used multiple multivariate-normal distributions to simulate the existence of multiple users. 
For the temporal dimension, we used a uniform distribution. We then recorded the results of the descriptive statistics \texttt{COUNT}, \texttt{MEAN}, \texttt{SUM} over different attributes in the datasets to sufficiently make sure the workload is randomized. 

\textbf{Synthetic datasets}: We also generated synthetic datasets and workloads to stress test our system. We generated a varying number of predicates and attributes to see how it would affect a state-of-the-art model chosen by an initial study comparing different models under different aggregates. This helped us understand the implications of our chosen representation and identified under what conditions the accuracy deteriorates.

\subsection{Predictability}
\begin{figure}[htbp]
\begin{center}
\includegraphics[height=4cm,width=6cm, keepaspectratio]{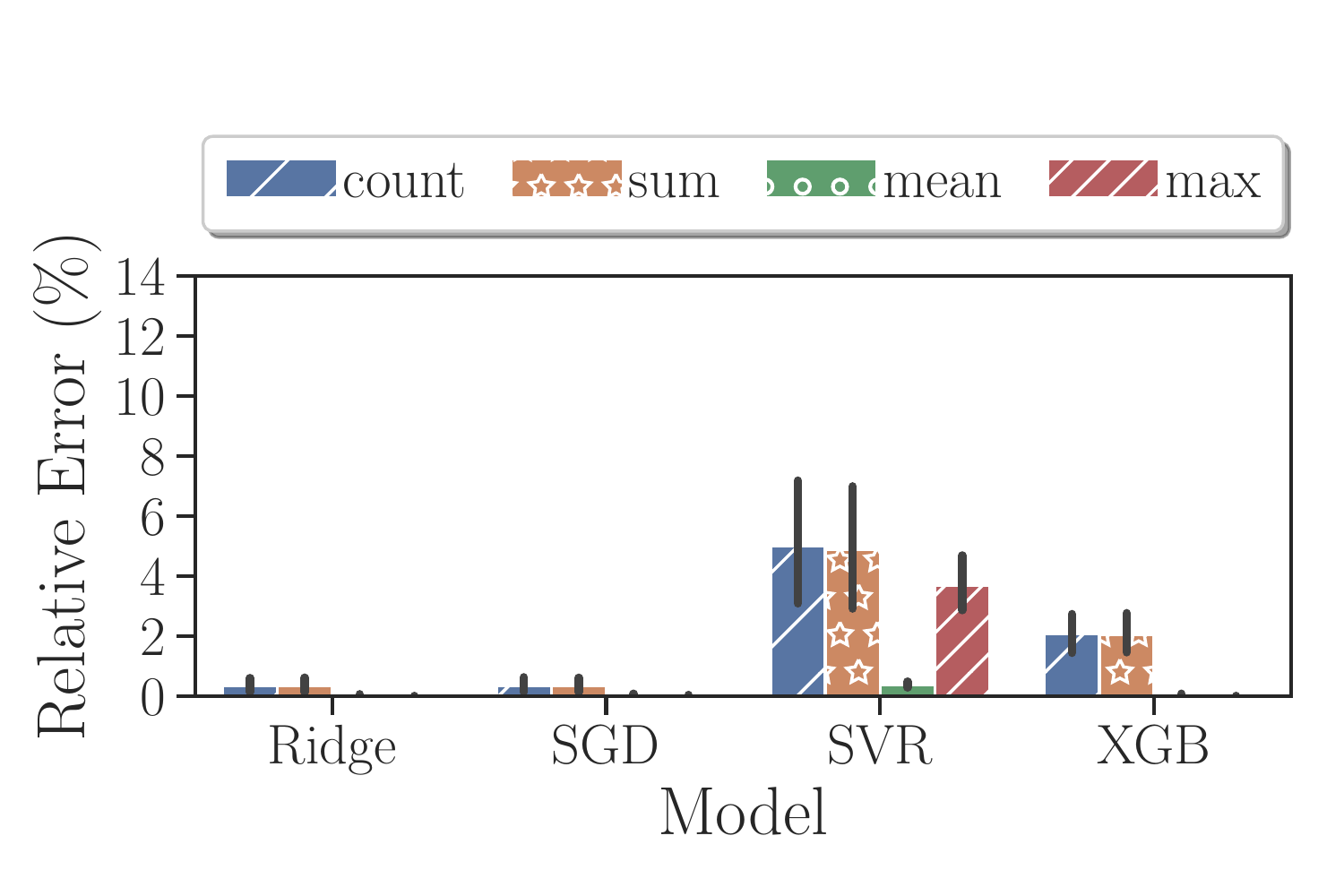}
\caption{Relative prediction error for descriptive statistics.}
\label{fig:accuracy-aggsXmodels}
\end{center}
\vspace{-0.25cm}
\end{figure}

We measured the prediction accuracy of our system using both synthetic and real datasets. We examine the median relative error, unless stated otherwise.
We first examine the \textit{predictability} of different 
descriptive statistics using a variety of supervised ML models. The experiments were conducted using our synthetic workloads to test accuracy across a varying number of attributes/columns and predicates. The models are trained using $80\%$ of the total queries ($=10^4$) and tested against $20\%$. 
Where: \texttt{Ridge} is for Regularized Linear Regression model, \texttt{SGD} is a linear regression model trained on-line using SGD, \texttt{SVR} is for Support Vector Regression with 
RBF kernel, \texttt{XGB} is the XGBoost model. 
None of the models was hyper-tuned to provide the 
purest accuracy as we desired to test whether different statistics could be better estimated by different models, 
thus indicating the need to choose optimal models at training. Figure \ref{fig:accuracy-aggsXmodels} shows the results of this experiment. The main takeaways are as follows. First, query-driven learning consistently produces low relative errors across all statistics. Second, it is largely insensitive to underlying ML models; given our proposed representation, all models are able to predict the given statistics with small error, well below $10\%$. 
Third, there is some high variation across the reported error as all of the workloads with varying predicates and columns were used.
Finally, all statistics can be adequately predicted by a single algorithm, that being \texttt{XGB}. 
The latter represents an advantage for this work, as the system can be optimized for storing such models and can be designed around this single class of model instead of trying to accommodate a variety of them, each with its own restrictions.

\begin{figure}[!htbp]
\begin{center}
\begin{tabular}{cc}
\includegraphics[height=4cm,width=4.5cm,keepaspectratio]{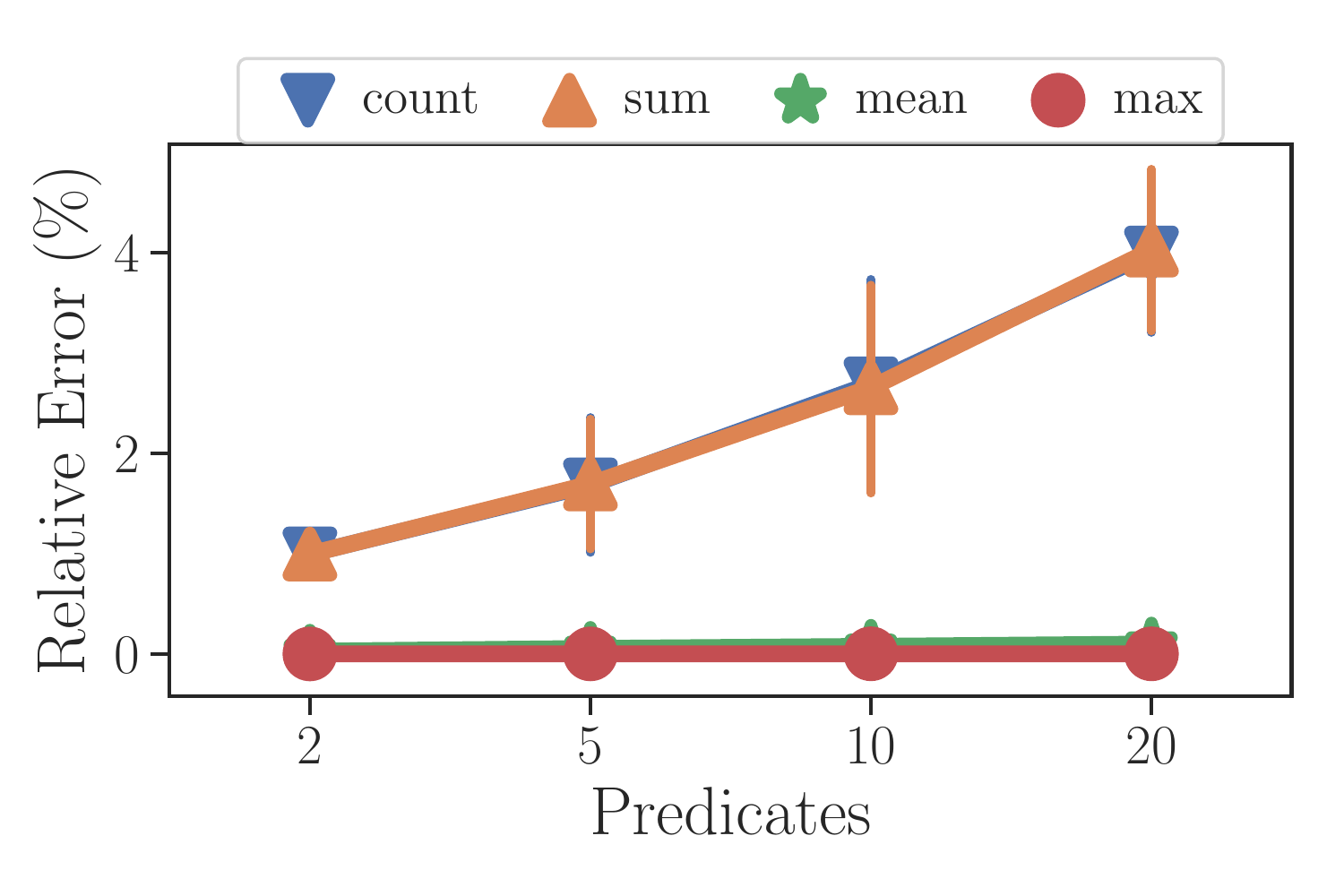}
\includegraphics[height=4cm,width=4.5cm,keepaspectratio]{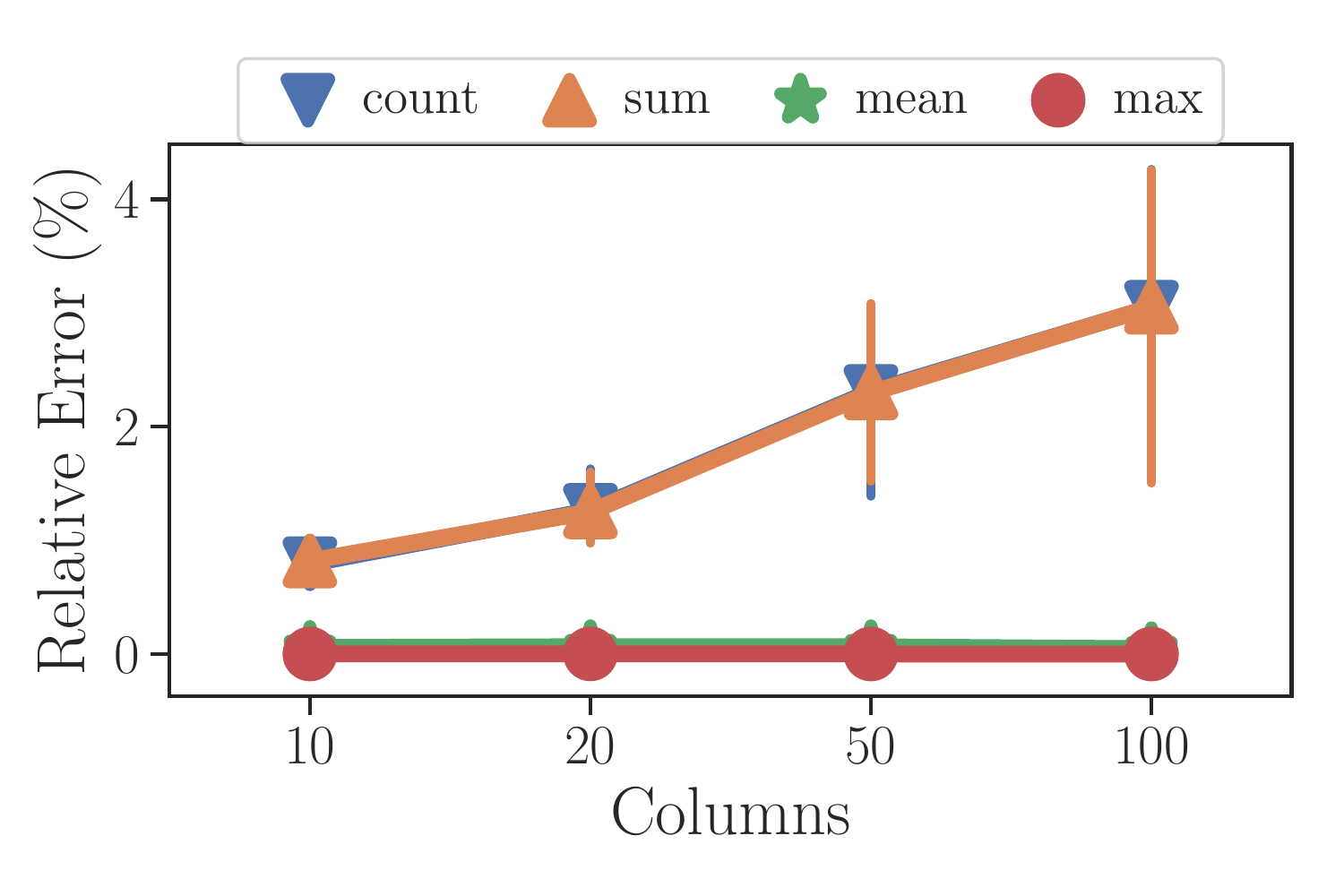}\\
\end{tabular}
\caption{Relative error of statistics vs. increasing number of (left) predicates, (right) columns.}
\label{fig:acc-vs-conditions}
\end{center}
\end{figure}

Using the most accurate model derived from our experiments (\texttt{XGB}), 
we evaluated the prediction accuracy over different statistics 
under different conditions, shown in Figure \ref{fig:acc-vs-conditions}. As expected, the increase in number of predicates used and number of columns has a negative impact on accuracy. The representation used is high dimensional as the query vector size is $\mathbb{R}^{2d}$, where $d$ is the number of columns in a dataset. Hence, for $d=100$ columns used, Figure \ref{fig:acc-vs-conditions}(right), the model is trained using $\mathbb{R}^{200}$. This is no trivial task and is stress testing our system's capabilities. In addition, the number of predicates used is the number of $(l_i,u_i)$ elements which restrict the sub-space and are sparsely populated. Again this can make the fitting process harder. However, even under these conditions, the increase in error is no more than linear w.r.t to increasing the number of predicates and columns. With statistics such as \texttt{MAX}, being less impacted by this change, as their error seems to be invariant from the beginning. 
Nonetheless, the results are reassuring: The proposed approach delivers very low errors, even when large number of predicates and columns are used in queries.  To put things in perspective the median number of columns selected in a query is around $8$ \cite{kandula2016quickr} in typical workloads and also followed by our proposed workload.

We experiment with real datasets to 
demonstrate the applicability of our system under real conditions. As evident from Figure \ref{fig:acc-vs-queries} our system provides 
estimations for descriptive statistics over 
different types of real datasets with relative error below $10\%$.
(A relative error below 10\% is the target of modern state of the art approximate-answer production systems \cite{kandula2016quickr}). We also tested these datasets usng VerdictDB \cite{park2018verdictdb}, a state-of-the-art system in Approximate Query Processing. The errors obtained varied from $1\%-14\%$ with sampling ratio of $1\%-10\%$. These results are comparable to ours and show that our system can be reliably used in parallel to such engines when local access is needed and resource consumption at CS is to be minimized.
After training the system with more than ca. 2000 queries the  relative error starts approaching its minimum value rather swiftly for both datasets (named \textbf{Crimes} and \textbf{Sensors}. This demonstrates the capabilities of the proposed learning approach  to offer high accuracy estimates for approximating analytical query answers with only a fairly small number of training queries. Note that typical industrial-strength in-production big data analytics clusters used for approximating answers to such analytical queries receive several million of queries per day \cite{kandula2016quickr}. Therefore, one can expect that a system employing our approach would receive a few thousand of training  queries in a just few tens of seconds. 


\begin{figure}[!htbp]
\begin{center}
\begin{tabular}{cc}
\includegraphics[height=4cm,width=4.5cm,keepaspectratio]{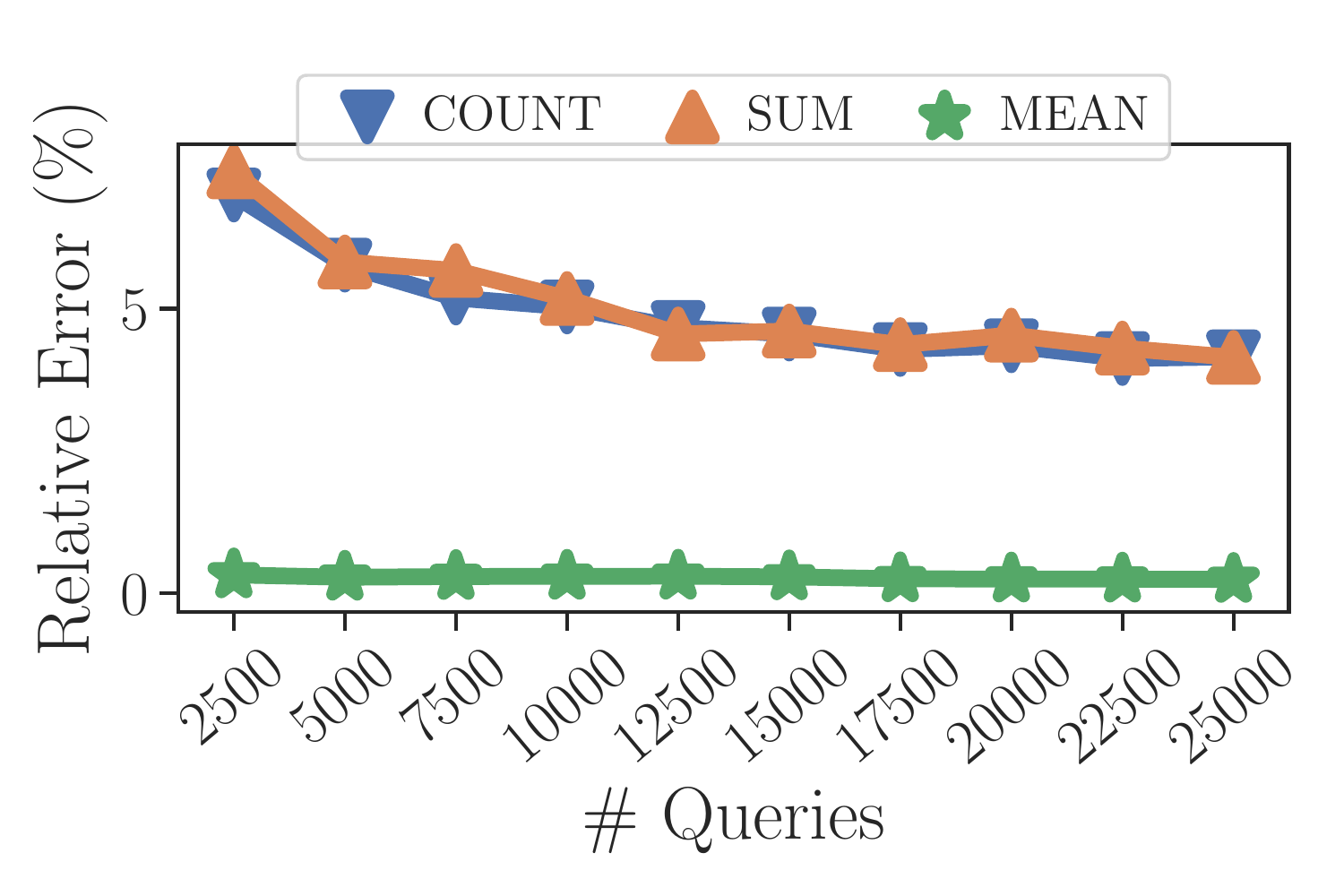}
\includegraphics[height=4cm,width=4.5cm,keepaspectratio]{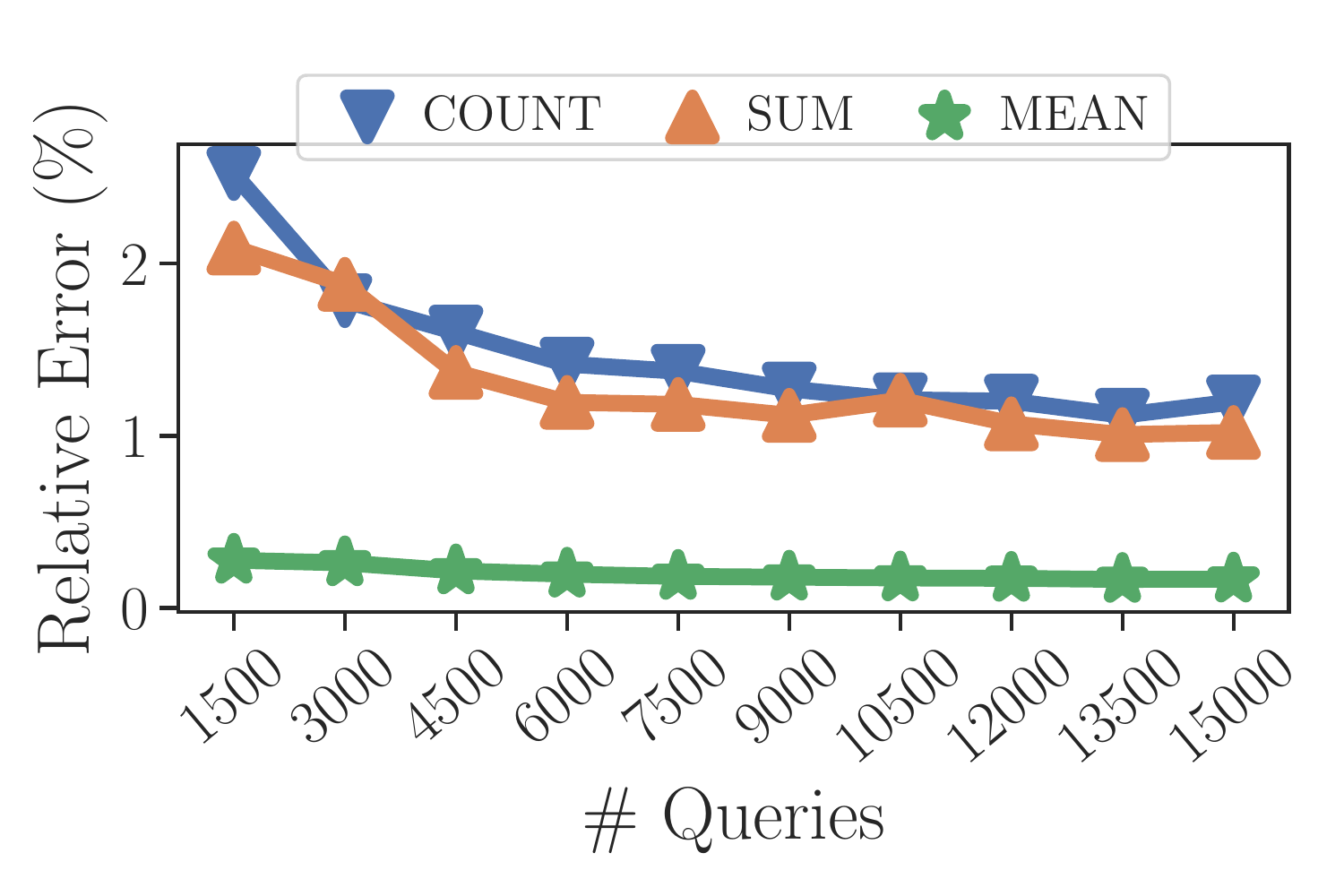}\\
\end{tabular}
\caption{Relative error vs. number of training queries; (left) Crimes (right) Sensors.}
\label{fig:acc-vs-queries}
\end{center}
\end{figure}
\subsection{Performance \& Storage}
We examine the performance and storage requirements of our system. 
This is important as our solution has to be light-weight both 
in terms of storage overhead for ADs and efficient in 
transferring models through the network. 
We examine all the above-mentioned models 
to identify the most efficient ones in training and prediction. 
A synthetic workload with $50$ predicates and $100$ columns is used for training all the models.
For \textit{Prediction Time} (PT) in Table  \ref{tab:perf_stor} 
we report on the expected prediction time and standard deviation 
of each model. As expected, \texttt{SVR} has the worst performance. 
The central takeway here is that PTs are negligible -- much less than a millisecond, thus, 
guaranteeing efficient statistics estimation irrespective of the adopted model. Even though there are multiple models trained, to account for varying query patterns, the time complexity associated is $\mathcal{O}(K)$ with $K$ usually being small. We show that our learning models based on mining query logs and learning using query-answer pairs, can go a long way in amplifying the capabilities of analytic system stacks as they can act as a 'caching' mechanism without actually storing the results of past queries but instead using models to perform answer estimation for new queries. 

For measuring the training time of individual models, we varied the number of training 
samples and examined the expected model Training Time (TT) in 
Table \ref{tab:perf_stor}. We used $11$ training samples varying in size wrt $\{4\cdot 10^2, \ldots, 4\cdot 10^5\}$. For \texttt{SVR}, we stop 
recording after $1.6\cdot 10^4$ training samples as the algorithm is no longer efficient and should be avoided. 
Although \texttt{XGB} appears to perform the worst, 
we note that its TT is no more than $53$ seconds without using its multi-threading capabilities. 

We also, examine the model Size in KB shown in Table \ref{tab:perf_stor}
and observe that it refers to a 
negligent cost fulfilling our initial requirements about 
the system being light-weight to reside in the ADs memory. 
The results are for one individual model therefore the resulting 
storage cost is $K$ times the initial one. 
This could be a significant overhead, especially for \texttt{SVR} 
as $K \gg 1000$. However, the cost incurred is not preventive as the benefit of decreasing latency times, offloading queries otherwise issued to the cluster and no extra monetary cost are far greater.
\begin{table}[htbp]
    \centering
    \begin{tabular}{lrrr}
    \hline
    {} &       Size (KB)  &       TT (s) &    PT (ms)  \\
    \hline
    Ridge &     $1.45\pm0.6$ &   $0.55\pm1$ &     $0.0008\pm0.0004$ \\
    SGD   &     $1.73\pm0.6$  &   $0.44\pm0.8$ &  $0.0008\pm0.0007$ \\
    SVR   &  $1332.68\pm944$  &   - &  $0.14\pm0.078280$ \\
    XGB   &    $65.42\pm 4$ &  $52.58\pm90$ &  $0.008\pm0.005201$ \\
    \hline
    \end{tabular}
    \caption{Performance and Storage results across models}
    \label{tab:perf_stor}
    \vspace{-0.25cm}
\end{table}


\subsection{Adaptivity}

\begin{figure}[!htbp]
\begin{center}
\begin{tabular}{cc}
\includegraphics[height=4cm,width=4.5cm,keepaspectratio]{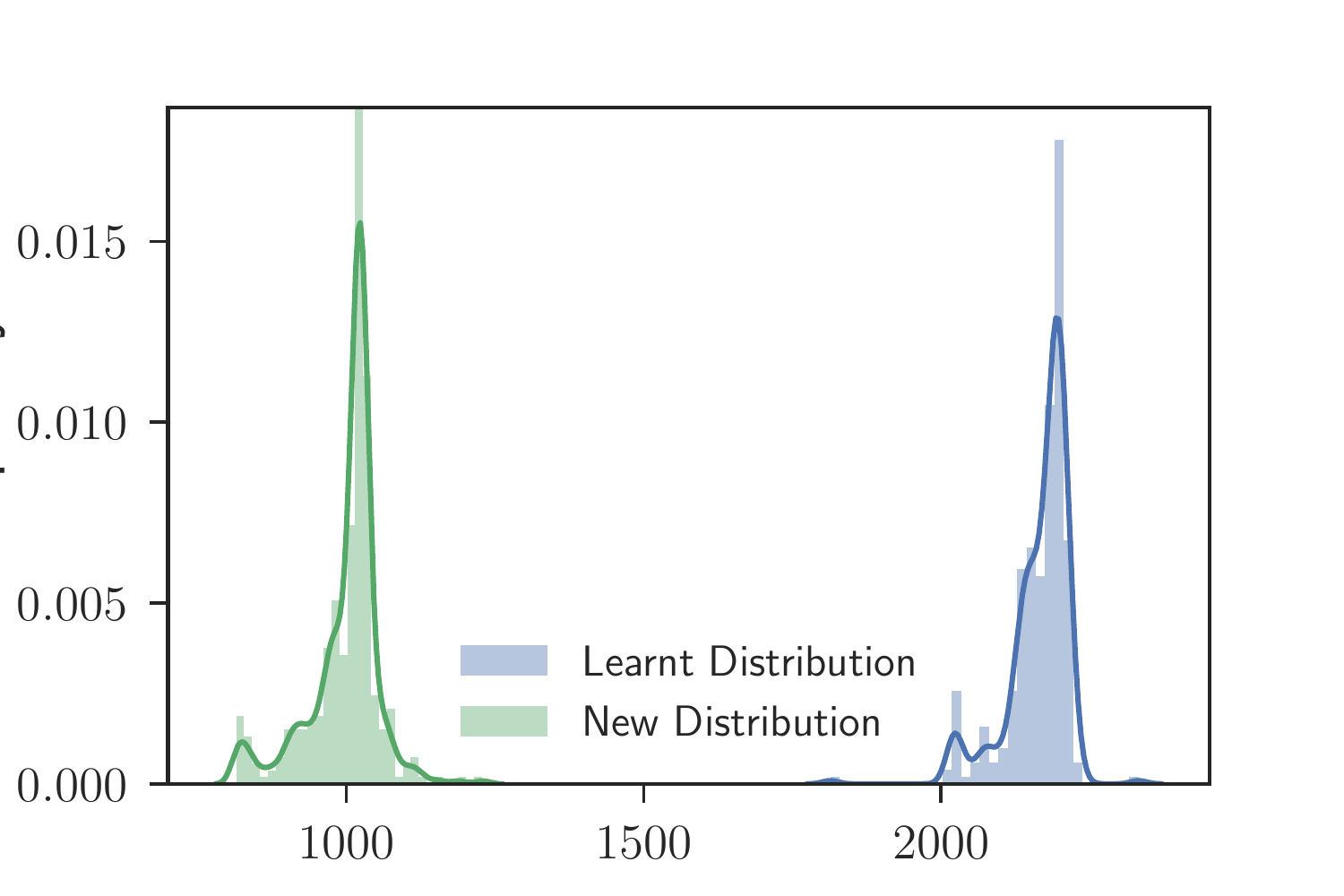}
\includegraphics[height=4cm,width=4.5cm,keepaspectratio]{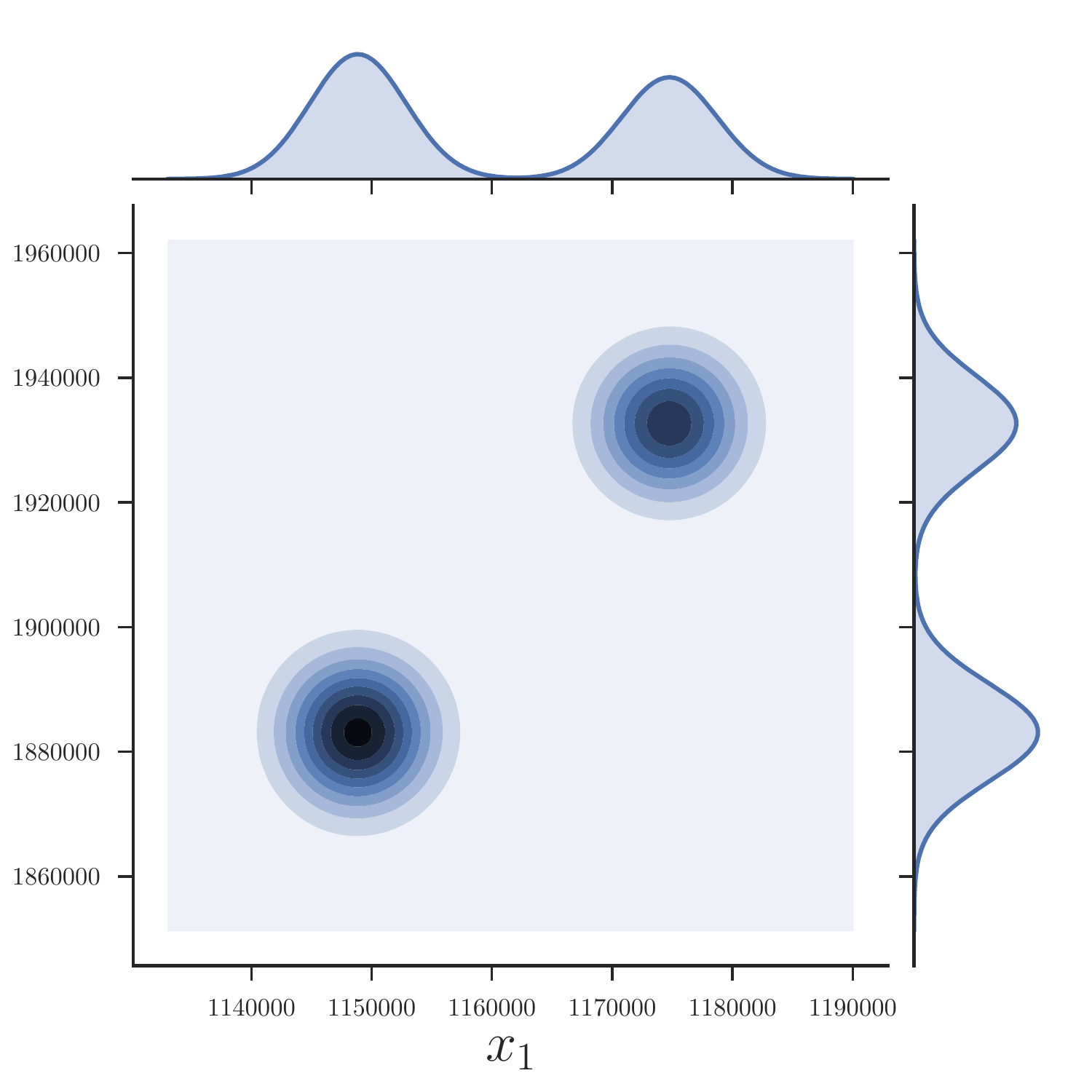}\\
\end{tabular}
\caption{(Left) Inducing concept drift 
of learned distribution of $y$; 
(right) Lower left is the initial query pattern distribution; upper right is the new one.}
\label{fig:concept-drift-experiment}
\end{center}
\end{figure}

To examine our CDM, ADM due to concept drift 
we devised the following experiment: 
consider a $\mathcal{M}_i$ 
that has already learned a particular distribution of $y$ being deployed to answer queries. 
At a particular point in time the query patterns 
might shift as shown in Figure \ref{fig:concept-drift-experiment}. 
Figure \ref{fig:concept-drift-experiment} 
shows two different query distributions. Figure \ref{fig:concept-drift-experiment}(left) are the distributions of the query answers $y$. Their respective query patterns are shown in Figure \ref{fig:concept-drift-experiment} (right) \footnote{Only two dimensions shown for visualization purposes}. Our aim is to examine whether CDM 
detects a query pattern shift from one distribution to another. If remained undetected, 
it will cause detrimental problems in accuracy 
due to different distributions of $y$. 
We first set the detection threshold $h=3\sigma_{\Tilde{u}}$ and 
convergence threshold $c=0.008$; a following sensitivity analysis 
shows the impact of these tuning parameters on ADM and CDM. 
We gradually introduce new query patterns and compare 
our system with an approach where no adaptation is deployed. 
Figure \ref{fig:adaptivity} shows the different queries being processed by our mechanism and the associated \textit{true} prediction error. 
We first measure the error of queries using the \textit{known} distribution until $t=66$. 
From that point onwards, we shift to the unknown distribution and 
evidently the error increases dramatically should no adaptation 
mechanism be employed. On the other hand, 
CDM detects that a shift has happened and 
transits the system from prediction mode to 
buffering mode until the exiting criteria are met. 
At the end, a new model is introduced which is trained using 
the new distribution as evidenced by the decreased error at $\mathcal{M}_{new}$.
\begin{figure}[!htbp]
\begin{center}
\includegraphics[height=4cm,width=6cm,keepaspectratio]{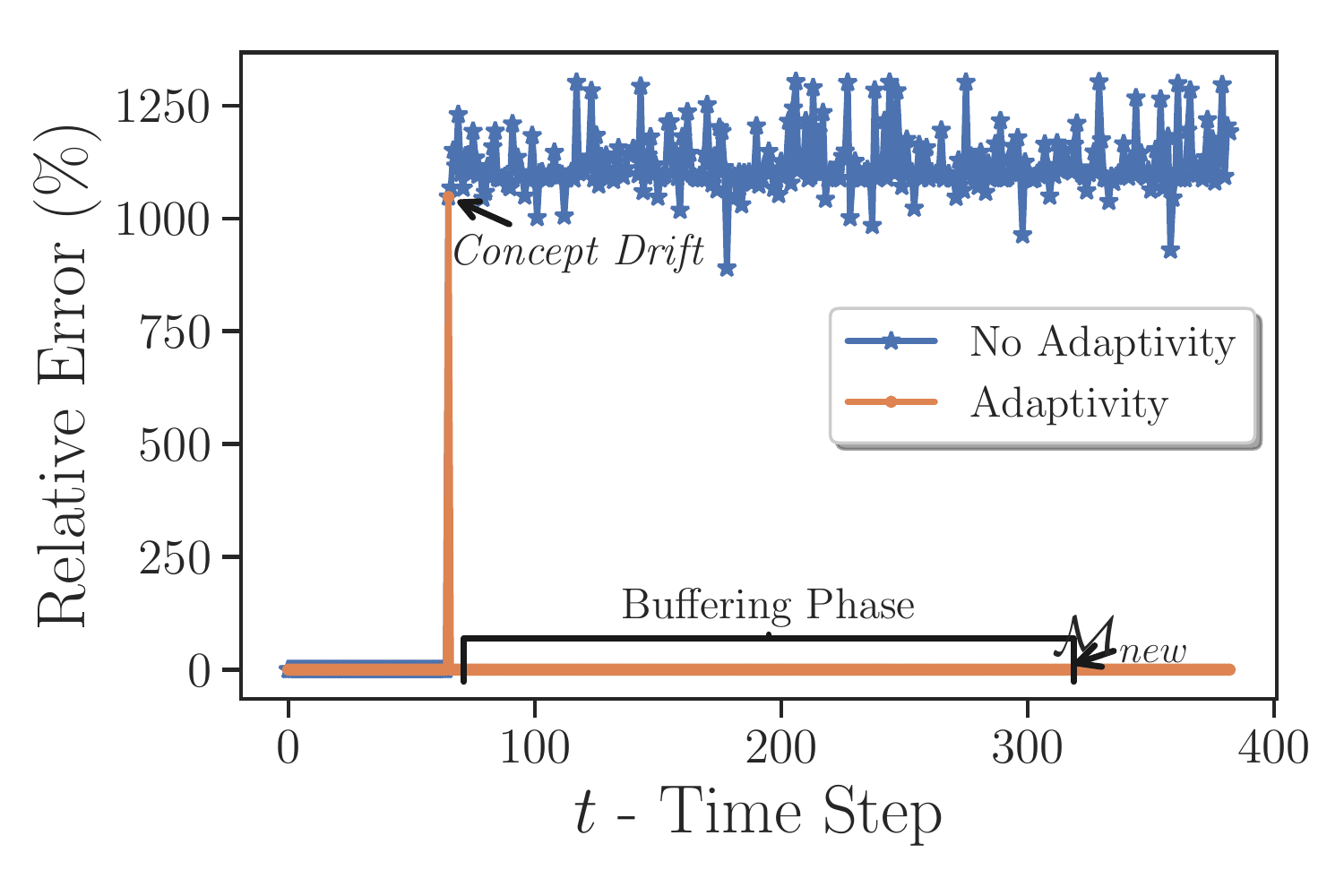}
\caption{Error with concept drift detection/adaptation.}
\label{fig:adaptivity}
\end{center}
\end{figure}

Parameters $h$ and $c$ are responsible for the ADM and CDM
with the impact of $c$ shown in Figure \ref{fig:sensitivity-c}(left). 
As we increase $c$ we allow for an early exit from buffering mode. 
An early exit means that less queries have been processed thus potentially 
the examples are not sufficient for accurately learning the distribution as witnessed in Figure \ref{fig:sensitivity-c}(left), where the relative error increases, therefore the accuracy decreases as we increase $c$. 
\begin{figure}[!htbp]
\begin{center}
\begin{tabular}{cc}
\includegraphics[height=4cm,width=4.5cm,keepaspectratio]{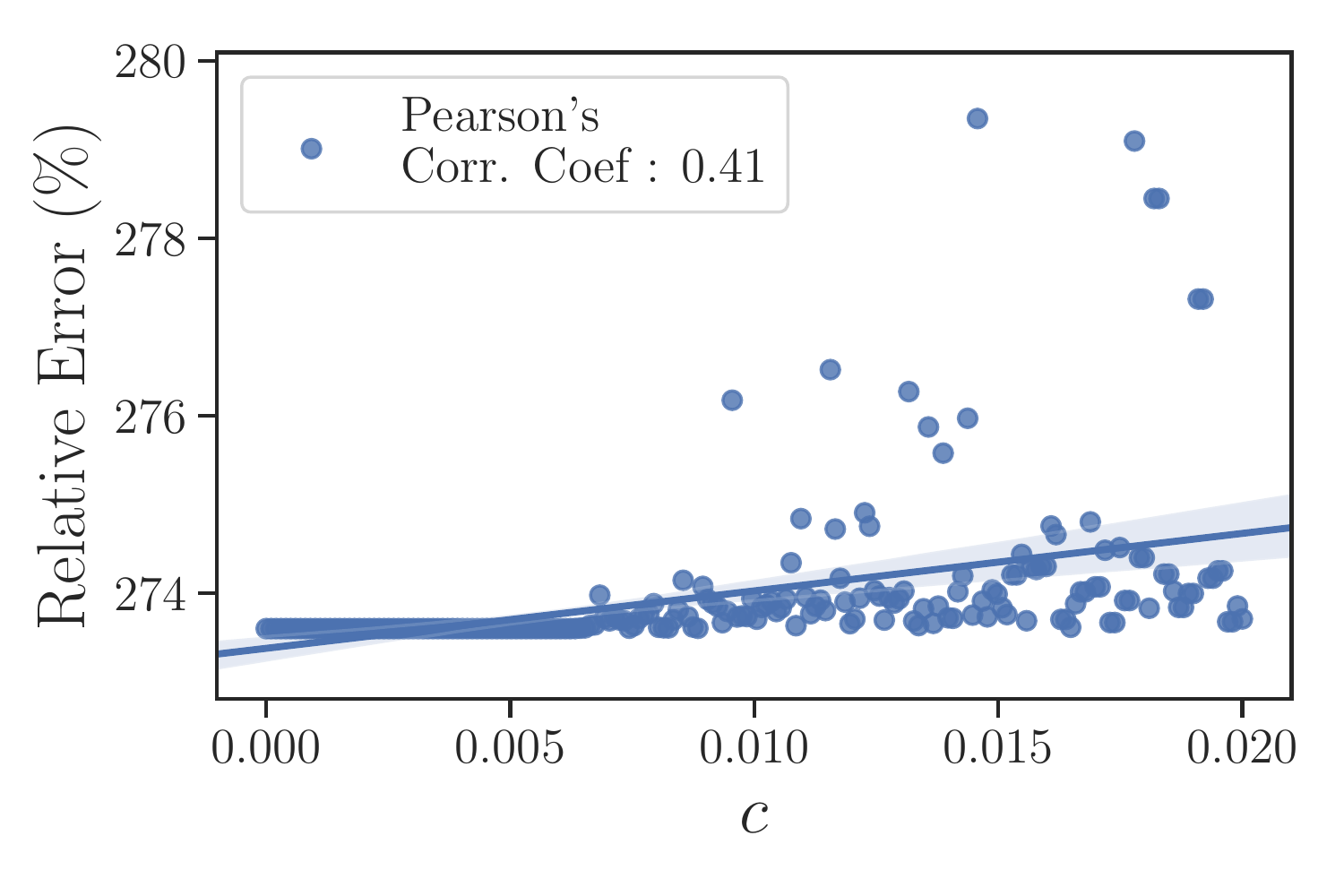}
\includegraphics[height=4cm,width=4.5cm,keepaspectratio]{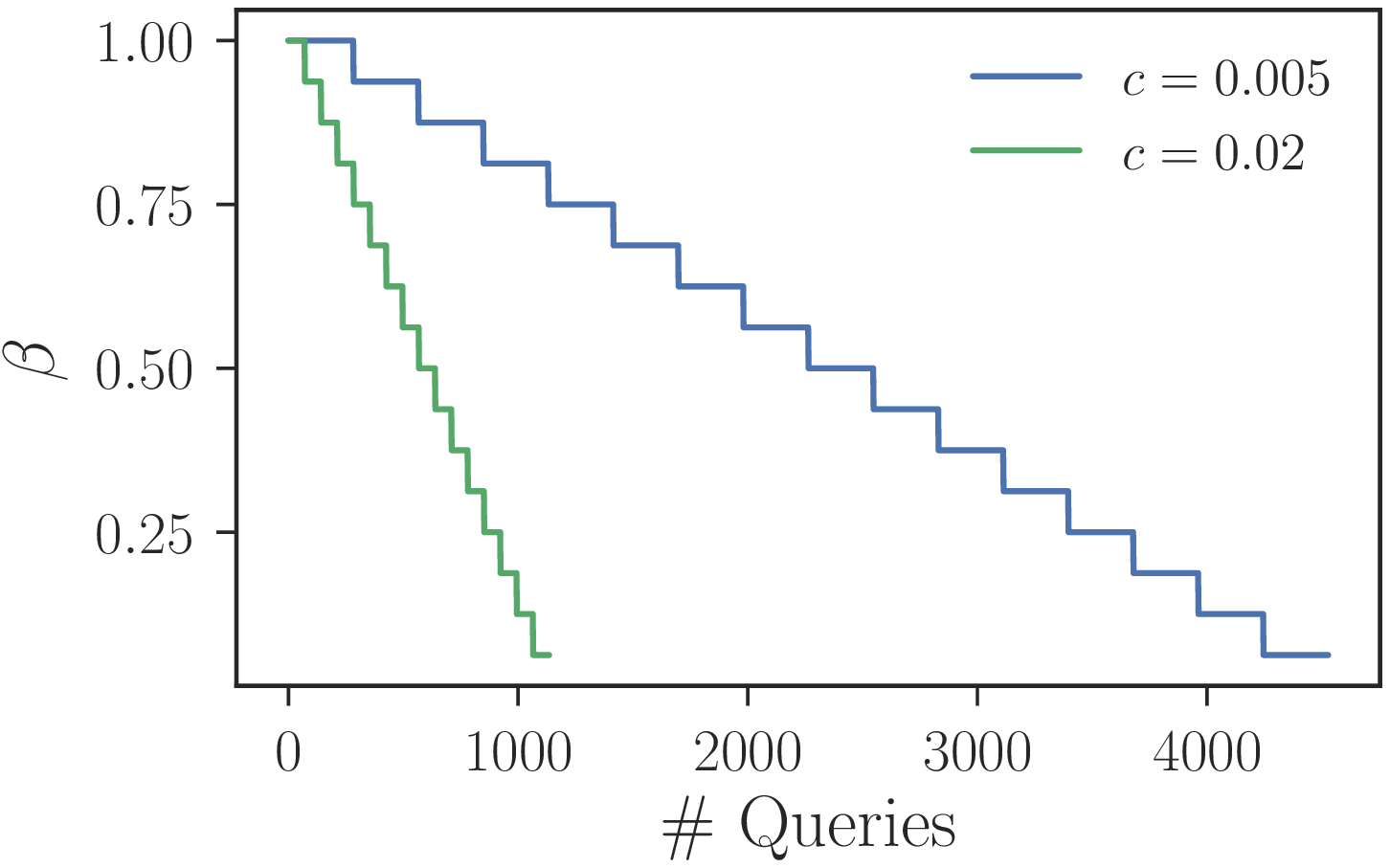}\\
\end{tabular}
\caption{(Left) Error vs. convergence $c$; (right) diminishing probability of buffering $\beta$ as more query spaces are \textit{known}.}
\label{fig:sensitivity-c}
\end{center}
\vspace{-0.25cm}
\end{figure}
\FloatBarrier
Figure \ref{fig:sensitivity-c}(right) shows the diminishing probability of entering the buffering mode $\beta = P(G > h)$ building upon our discussion of slowly converging the system into an Offline mode. 
As more queries are processed across varying query spaces 
then our system is incrementally learning the whole data space. 
At a certain point, all query subspaces will be \textit{known} along 
with their representatives. Thus, the probability 
of entering the buffering mode due to potentially unknown query distribution 
reduces to zero almost surely. 
We provide an experiment in which there is a predefined 
fixed number of Query Spaces (QS) $K=16$, $QS = \{QS_1, \ldots, QS_{K}\}$. 
Queries are generated randomly in a sequence from one QS to another, each time learning $QS_{k-1}$. Thus, given this fixed number of QSs and set $h, c$, 
the probability of entering buffering mode can be approximated by 
$\beta = P(G > h) = 1 - \frac{l}{K}$, where $l \leq K$ 
denotes the number of known QS so far. 
Liaising this with 
Figure \ref{fig:sensitivity-c}, we observe that 
the probability reduces in a step-wise manner
tending to zero when $l \to K$. 
For a relatively high $c$ value, 
the rate of convergence to the offline mode 
becomes faster but with a higher error as witnessed by the previous experiment. 
As for parameter $h$, a low value indicates 
smaller tolerance when estimating errors and vice versa. This might force the system to adapt when not needed. 
Thus, it is domain appropriate to hyper-tune the parameter accordingly; hyper-tuning $h$ is part of 
our on-going research. However, we have found the proposed heuristic 
of $3\sigma\leq h \leq 5\sigma$ to work well empirically.
\section{Related Work}
Our work is related to prior work in analytical-query processing and in applied ML research communities and to prior work focusing on the benefits of the query-driven approach in analytical query processing and tuning \cite{anagnostopoulos2015learning,anagnostopoulos2017query,ma2018query,sattler2003quiet}. Analytical queries nowadays are executed over underlying systems that provide either exact answers \cite{melnik2010dremel,thusoo2009hive} or approximate answers  \cite{park2018verdictdb,agarwal2013blinkdb,park2017database,kandula2016quickr,hellerstein1997online} working over large big data clusters in DCs/CS requiring several orders of magnitude longer query response times. The contributions in this work are largely complementary to all this work. Specifically, during the training phase and in the adapting/buffering phase the system proposed here can be supported either by an exact or an approximate query processing engine. In addition, what makes our solution different is that it can be stored locally on an analyst's device as it has low storage overhead and also requires no communication to the cluster. 

Query-driven models are largely being deployed for both aggregate estimation \cite{anagnostopoulos2017query,anagnostopoulos2015learning} and for hyper-tuning \cite{van2017automatic} database systems. Unlike \cite{anagnostopoulos2017query,anagnostopoulos2015learning} our focus is on a wide variety of aggregate operators and not just \texttt{COUNT} for selectivity estimation. Furthermore, we address the crucial problem of detecting query pattern changes and adapting to them, which (to our knowledge) has not been addressed in this context before. Hence,our framework can be leveraged by all query-driven implementations in cases of dynamic workloads that are non-stationary.

Moreover, concept drift adaptation 
is well understood \cite{tsymbal2004problem,gepperth2016incremental,ditzler2015learning,elwell2011incremental}, mostly dealing with classification tasks, where classifiers adapt to new classes. We adapt concept drift to   
query-driven analytical processing, relying on \textit{explicit partitioning} \cite{gepperth2016incremental},
ensuring it avoids destructive forgetting given that the accuracy for the previously learned query patterns will not degrade. It is also favorable given our initial off-line design which already uses partitioning for clustering the query patterns and learning local models in given sub-spaces. Our work contributes 
with monitoring and detecting real-time query patterns change based on \textit{approximating} the prediction error, which differentiates with 
the previous concept drift methods by measuring the actual error;
evidently, this is not applicable in our case. Finally, we propose a novel reciprocity-driven adaptation mechanism in which we set a mechanism deciding when a new model should be trained engaging the knowledge derived from other possibly changing models in the CS.
\section{Conclusions}
In this work we contribute a novel framework for adapting trained models under concept drift. We focus on models used for estimating  analytical query answers efficiently and accurately, however we note that the framework is applicable in other domains as well.
The contributions centre on a novel suit of ML models, which mine past and new queries and incrementally build models over quantized query-spaces using a vectorial representation. The described mechanisms (ADM and CDM) bear the ability to adapt under changing analytical workloads, while maintaining high accuracy of estimations. As shown by our evaluation (using real and synthetic datasets), the proposed approach enjoys high accuracy (well below 10\% relative error) across all aggregate operators, with low response times (well below a millisecond) and low footprint and training-time overheads. The contributed adaptability mechanism is able to detect changes using estimated errors and swiftly adapt. Furthermore, as more queries are processed our system has the potential to reach global convergence as no more query patterns remain undiscovered. This can significantly reduce unnecessary communication to cloud providers thus reduce network load and monetary costs.

\section{Acknowledgement}
Fotis Savva is funded by an EPSRC scholarship, grant number 1804140. Christos Anagnostopoulos and Peter Triantafillou were also supported by EPSRC grant EP/R018634/1

\FloatBarrier
\bibliographystyle{abbrv}
\balance
\bibliography{bibliography}
\newpage
\section{Appendix}
\subsection{Real Datasets \& Workloads}
\textbf{Crimes:} 
For constructing the workload over the \textbf{Crimes} dataset we initially sampled 10K points and obtained the mean and standard deviation for the attributes \textit{X\_Coordinate},\textit{Y\_Coordinate}. By setting the obtained statistics as parameters for a multivariate normal distribution we generated a number of \textit{Cluster} points ($=5$). Centered on each one of those cluster points we constructed multivariate normal distributions with a fraction ($=0.01$) of the original standard deviation. Based on this we generated a number of points ($=10K$) for around each cluster point. Based on those points we constructed queries covering a random range across both the \textit{X/Y Coordinates}. For each point a fraction of the original std was used to construct ranges : $\text{x\_range} = \frac{\text{x\_std}}{2}\text{rand(0,1)}$, where \textit{x\_std} is the original standard deviation, same approach was followed for the \textit{y\_range}. The queries filtered the complete dataset based on (where pt is the point) :
\begin{multline}
\alpha_{X\_Coordinate} \geq \text{pt}-\text{x\_range} \land \alpha_{X\_Coordinate} \leq \text{pt}+\text{x\_range}\\
\land \alpha_{Y\_Coordinate} \geq \text{pt}-\text{y\_range} \land \alpha_{y\_Coordinate} \leq \text{pt}+\text{y\_range}
\end{multline}
On the filtered dataset generated by each query with varying cardinality we extracted basic statistics like \texttt{COUNT, AVG, SUM} on attributes that would make sense (\textit{Beat} - avg, \textit{Arrest} - sum). 

\textbf{Sensors}: 
For Sensors we obtained the mean and standard deviation of the temporal dimension after encoding it and normalizing it. The min/max statistics were also obtained. We then generated a number of queries ($=50K$) with range equal to a fraction of the complete distance between the max/min ($0.2\times(\text{max}-\text{min})$. The center of the queries was randomly generated by a normal distribution with parameters equal to the obtained mean and half original standard deviation. The same processes as in Crimes was used to filter the complete dataset and extract statistics on attributes that made sense (\textit{Temperature}-avg, \textit{Light}-sum).

\subsection{Synthetic Data Generation}
Specifically, we generated datasets with varying dimensionality $[10, 20, 50, 100]$ to simulate data with large number of columns. Each column contains numbers generated from a uniform distribution $\mathcal{U}[0, 10^{6}]$, therefore each point is $\mathbf{x} \in [0, 10^{6}]^d$ where $d \in [10, 20, 50, 100]$. We then generated a number of query workloads with a varying number of selected columns to be restricted by predicates $p \in [2, 5, 10]$ (Increasing predicates even further resulted in no tuples returned) according to the dimensionality of each of the synthetic datasets. For instance, for a dataset with $d=10$ the resulting workloads were two, with the selected columns being $(2, 10)$. Each query point for those two datasets is then $\mathbf{q} \in [0, 10^{6}]^{2p}$.

We also describe the query generation process in Algorithm \ref{alg:query-gen}. We first set the number of queries $N$ and number of predicates and columns in our datasets $\mathcal{D}$, $\mathcal{P}$. The retrieved range size $r$ is obtained by a Normal distribution $\mathcal{N}(\text{sel}\cdot10^6, 0.01\cdot10^6)$ with the mean being a fraction of the complete range defined by selectivity ration $sel=0.5$ to ensure enough tuples are returned as with an increasing number of predicates less tuples are returned. We then loop through over the parameters issued and generated queries using the provided algorithms. Where the $random$ function returns a random vector of size $p$ over the given range, and $randomBit$ generates a random bitmask for the selection of columns taking part in the query. Then the lower-bound and upper-bound for the given predicates is set and we execute the query over the restricted space given by the bounds over the selected columns by $b$ using dataset $d$.

\begin{algorithm}
\SetKwInOut{Input}{Input}
\SetKwInOut{Output}{Output}
 \caption{Generating Queries}
 \label{alg:query-gen}
 \Input{$N$,\# queries, $\mathcal{D} = [10, 20, 50, 100]$, \# columns}
 \Input{$\mathcal{P}= [2, 5, 10]$, \# predicates (selected columns)}
 \Input{$r$, range size retrieved from histogram to ensure similar selectivities}
 \For {$d\in \mathcal{D}$}{
 	\For {$p \in \mathcal{P}$}{
    	\For {$i= 1 \textbf{ to } N$}{
        	$\mathbf{z} \gets random(\mathcal{N}(10^6, 100), p)$ \;
            $\mathbf{b} \gets randomBit([0, d], p)$ \; 
            $\mathbf{lb} \gets \mathbf{z}-\frac{r}{2}$ \;
            $\mathbf{ub} \gets \mathbf{z}+\frac{r}{2}$ \;
            $\mathbf{q}_i^{(m)} \gets [\mathbf{lb}, \mathbf{ub}]$\;
            $y_i \gets executeAggregates(\mathbf{q}_i^{(m)}, d)$ \;
            $\mathbf{q}_i \gets (\mathbf{q}_i^{(m)},y)$ \;
            $\mathcal{W}_{d,p} \cup \{\mathbf{q}_i\}$\;
        }
    }
 }
 \Output{Resulting workloads $\mathbf{W} = (\mathcal{W}_{10,2}, \ldots, \mathcal{W}_{d,p}), \forall d\in \mathcal{D}, \forall p\in \mathcal{P}$}
\end{algorithm}

\end{document}